\newcounter{subeqn} \renewcommand{\thesubeqn}{\theequation\alph{subeqn}}%
\newcommand{\subeqn}{%
  \refstepcounter{subeqn}
  \tag{\thesubeqn}
}
\pgfplotsset{
  grid style = {
    dash pattern = on 0.025mm off 0.95mm on 0.025mm off 0mm, 
    line cap = round,
    black,
    line width = 0.5pt
  },
  tick label style={font=\small},
  label style={font=\small},
  legend style={font=\footnotesize},
}
\newacronym{BS}{BS}{base station}
\newacronym{CDN}{CDN}{content delivery network}
\newacronym{CF}{CF}{collaborative filtering}
\newacronym{CRP}{CRP}{{C}hinese restaurant process}
\newacronym{CS}{CS}{central scheduler}
\newacronym{D2D}{D2D}{device-to-device}
\newacronym{HetNet}{HetNet}{heterogeneous network}
\newacronym{ICIC}{ICIC}{inter-cell interference coordination}
\newacronym{ICN}{ICN}{information-centric network}
\newacronym{LTE}{LTE}{long term evolution}
\newacronym{PPP}{PPP}{{P}oisson point process}
\newacronym{PHY}{PHY}{physical layer}
\newacronym{SBS}{SBS}{small base station}
\newacronym{SINR}{SINR}{signal-to-interference-plus-noise ratio}
\newacronym{SCN}{SCN}{small cell network}
\newacronym{SVD}{SVD}{singular value decomposition}
\newacronym{UT}{UT}{user terminal}
\newacronym{QoS}{QoS}{quality-of-service}
\newacronym{QoE}{QoE}{quality-of-experience}
\newacronym{RAN}{RAN}{radio access network}
\newacronym{PDF}{PDF}{probability distribution function}
\newacronym{PGFL}{PGFL}{probability generating functional}
\newtheorem{definition}{Definition}
\newtheorem{theorem}{Theorem}
\newtheorem{assumption}{Assumption}
\newtheorem{proposition}{Proposition}
\begin{document}
\title{Cache-enabled Small Cell Networks: Modeling  and Tradeoffs}
\author{ 
		\IEEEauthorblockN{Ejder Baştuğ$^{\diamond}$, Mehdi Bennis$^{\star}$, Marios Kountouris$^{\diamond, \dagger}$ and Mérouane Debbah$^{\diamond}$,}\\
		\IEEEauthorblockA{
				$^{\diamond}$Large Networks and Systems Group (LANEAS), CentraleSupélec, Gif-sur-Yvette, France \\	
				$^{\star}$Centre for Wireless Communications, University of Oulu, Finland \\
				$^{\dagger}$Telecommunication Department, CentraleSupélec, Gif-sur-Yvette, France \\	
				\{ejder.bastug, marios.kountouris, merouane.debbah\}@supelec.fr, bennis@ee.oulu.fi
		}
		\thanks{This research has been supported by the ERC Starting Grant 305123 MORE (Advanced Mathematical Tools for Complex Network Engineering), the SHARING project under the Finland grant 128010 and the project BESTCOM.}
}
\maketitle
\vspace{-100ex}
\IEEEpeerreviewmaketitle
 
\begin{abstract}
We consider a network model where small base stations (SBSs) have caching capabilities as a means to alleviate the backhaul load and satisfy users' demand. The SBSs are stochastically distributed over the plane according to a Poisson point process (PPP), and serve their users either (i) by bringing the content from the Internet through a finite rate backhaul or (ii) by serving them from the local caches. We derive closed-form expressions for the outage probability and the average delivery rate as a function of the signal-to-interference-plus-noise ratio (SINR), SBS density, target file bitrate, storage size, file length and file popularity. We then analyze the impact of key operating parameters on the system performance. It is shown that a certain outage probability can be achieved either by increasing the number of base stations or the total storage size. Our results and analysis provide key insights into the deployment of cache-enabled small cell networks (SCNs), which are seen as a promising solution for future heterogeneous cellular networks.
\end{abstract}
\begin{keywords}
caching, stochastic geometry, small cell networks, small base stations, mobile cellular networks, Poisson Point Process
\end{keywords}

\section{Introduction}
Increasing traffic demand from mobile users due to the rich media applications, video streaming, social networks \cite{Cisco2014} is pushing mobile operators to make their mobile cellular networks evolving continuously (see \ac{LTE} \cite{3GPPRelease13}). \Glspl{SCN} \cite{Hoydis2011Green, Quek2013Small} and their integration with WiFi \cite{Mehdi2013When}, \glspl{HetNet} \cite{Andrews2013Seven}, together with many other ideas from both industry and academia, have now started being deployed and integrated in current cellular networks. In Europe, projects such as NewCom\# \cite{NewcomSharp} in the 7th Framework Program of the European Commission are focusing on the design of next generation cellular networks, and a new framework, called Horizon 2020 \cite{Horizon2020} is going to take place to support these efforts.

At the same time, content providers are moving their users' content to the intermediate nodes in the network, namely caching, yielding less delays for the access. \Glspl{CDN} such as Akamai \cite{Nygren2010Akamai} are for that purposes. In this context, \glspl{ICN} are emerging \cite{Ahlgren2012Survey}. Mixing these infrastructural concepts with cellular networks is also of interest \cite{Spagna2013TelcoCDN}\cite{Wanf2014CacheInTheAir}. Predicting users' behavior, and proactively caching the users' content in the edge of the network, namely base stations and user terminals, also shows that further gains can be obtained in terms of backhaul savings and user satisfaction \cite{Bastug2014LivingOnTheEdge}.

Even though the idea of caching in mobile cellular networks is somewhat recent, the origin of caching dates indeed back to 60s, where caching mechanisms are proposed to boost the performance of operating systems \cite{Belady1966Study}. Additionally, in past decades, many web caching schemes such as \cite{Borst2010Distributed} have appeared to sustain the data flow of the Internet. In the context of mobile cellular networks, there have been recent attempts on design of intelligent caching schemes by taking into account the wireless environment of mobile cellular networks. Due to its notorious non-tractability, these proposals are mainly based on approximate or heuristic solutions \cite{Bastug2013Proactive}\cite{Poularakis2014Multicast}\cite{Blasco2014LearningBased}. Beside these solutions, novel formulations and system models have been proposed to assess the performance of caching. For instance, information theoretical formulation of the caching problem is studied in \cite{Maddah2013Fundamental}. The expected cost of uncoded and coded data allocation strategies is given in \cite{Altman2013Coding}, where stochastically distributed cache-enabled nodes in a given area are assumed and the cost is defined as a function of distance. A game theoretical formulation of the caching problem as a many-to-many game is studied in \cite{Hamidouche2014Many} by taking into account data dissemination in social networks. The performance of caching in wireless \ac{D2D} networks is studied in \cite{Ji2014GridD2D} in a scenario where  nodes are placed on a grid and cache the content randomly. An alternative \ac{D2D} caching scenario with randomly located nodes is given in \cite{Altieri2014StoGeoD2D} and relevant tradeoffs curves are derived.  

The contribution of this work is to formulate the caching problem in a scenario where stochastically distributed \glspl{SBS} are equipped with storage units but have the limited backhaul capacity. In particular, we build on a tractable system model and define its performance metrics (outage probability and average delivery rate) as functions of  \ac{SINR}, number of \glspl{SBS}, target file bitrate, storage size, file length and file popularity distribution. By coupling the caching problem with \ac{PHY} in this way and relying on recent results from \cite{Andrews2011Tractable}, we show that a certain outage probability can be achieved either by 1) increasing number of \glspl{SBS} while the total storage size budged is fixed, or 2) increasing the total storage size while the number of \glspl{SBS} is fixed. To the best of our knowledge, our work differs from the aforementioned works in terms of studying deployment aspects of cache-enabled \glspl{SBS}. Similar line of work in terms of analysis with stochastic geometry tools can be found in \cite{Altieri2014StoGeoD2D, Altman2013Coding}. However, the system model and performance metrics are different than what is studied here.\footnote{Additionally, the related work \cite{Blaszczyszyn2014Geographic} was made public after the submission of this work.}

The rest of this paper is structured as follows. We describe our system model in  Section \ref{sec:systemmodel}. The performance metrics and main results are given in Section \ref{sec:permain}. In the same section, much simpler expressions are obtained by making specific assumptions on the system model. We validate these results via numerical simulations in Section \ref{sec:validation} and discuss the impact of parameters on the performance metrics. Then, a tradeoff between the number of deployed \glspl{SBS} and total storage size is given in Section \ref{sec:davidvsgoliath}. Finally, our conclusions and future perspectives are given in Section \ref{sec:conclusions}.\footnote{Compared to \cite{Bastug2014StoGeo}, this work contains more comprehensive mathematical treatment, proofs and the trade-off analysis conducted in Section \ref{sec:davidvsgoliath}.}
\section{System Model}
\label{sec:systemmodel}
The cellular network under consideration consists of \glspl{SBS}, whose locations are modeled according to a \ac{PPP} $\Phi$ with density $\lambda$. The broadband connection to these \glspl{SBS} is provided by a \ac{CS} via wired backhaul links. We assume that the broadband connection is finite and fixed, thus the backhaul link capacity of each \ac{SBS} is a decreasing function of $\lambda$. This in practice means that deploying more \glspl{SBS} in a certain area yields sharing the total broadband capacity among backhaul links. We will define this function more precisely in the next sections.

We suppose  that every \ac{SBS} has a storage unit with capacity $S$ nats (1 bit = $\text{ln}(2) = 0.693$ nats), thus they cache users' most popular files given in a catalog. The size of each file in the catalog has a length of $L$ nats and bitrate requirement of $T$ nats/sec/Hz. We note that the assumption on file length is for ease of analysis. Alternatively, the files in the catalog can be divided into chunks with the same length. The file popularity distribution of this catalog is a right continuous and monotonically decreasing \ac{PDF}, denoted as $f_{\mathrm{pop}}(f,\gamma)$. The parameter $f$ here corresponds to a point in the support of a file and $\gamma$ is the shape parameter of the distribution. We assume that this distribution is identical among all users. 

Every user equipped with a mobile user terminal is associated with the nearest \ac{SBS}, where its location falls into a point in a Poisson-Voronoi tessellation on the plane. In this model, we only consider the downlink transmission and overhead due to the file requests of users via uplink is neglected. In the downlink transmission, a tagged \ac{SBS} transmits with the constant transmit power $1/\mu$ Watts, and the standard unbounded power-law pathloss propagation model with exponent $\alpha > 2$  is used for the environment. The tagged \ac{SBS} and tagged user experience Rayleigh fading with mean $1$. Hence, the received power at the tagged user, located $r$-meters far away from its tagged \ac{SBS}, is given by $hr^{-\alpha}$. The random variable $h$ here follows an exponential distribution with mean $1/\mu$, represented as $h \sim \mathrm{Exponential}(\mu)$. 

Once users are associated with their closest \glspl{SBS}, we assume that they request some files (or chunks) randomly according to the file popularity distribution $f_{\mathrm{pop}}(f,\gamma)$. When requests reach to the \glspl{SBS} via uplink, the users are served immediately, either getting the file from the Internet via backhaul or being served from the local cache, depending on the availability of the file therein. If a requested file is available in the local cache of the \ac{SBS}, a \emph{cache hit} event occurs, otherwise a \emph{cache miss} event is said to be occurred. According to what we have explained so far, a sketch of the network model is given in Figure \ref{fig:systemmodel}.
\begin{figure}[!ht]	
	\centering
	\includegraphics[width=0.96\textwidth]{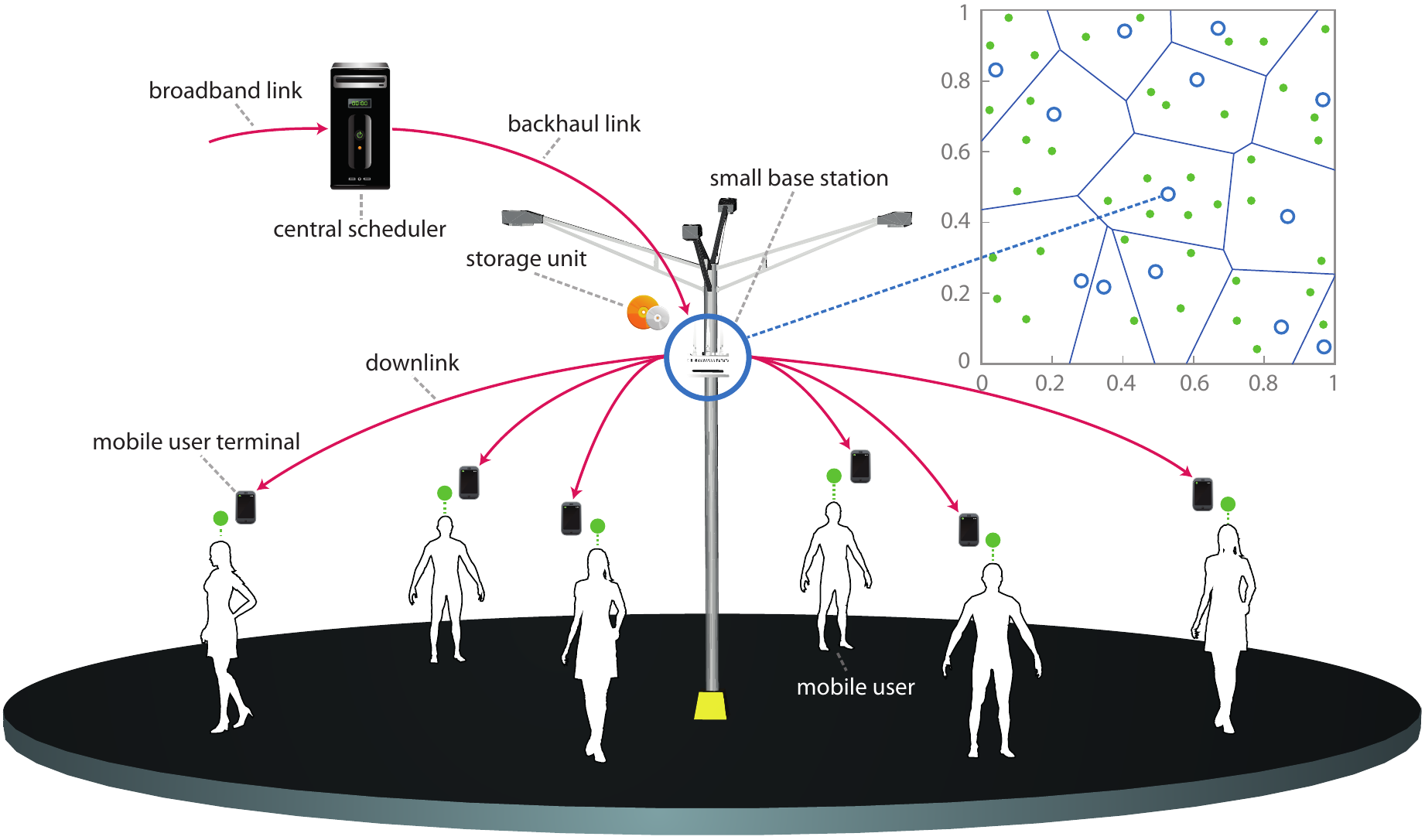}
	\caption{An illustration of the considered network model. The top right side of the figure shows a snapshot of \ac{PPP} per unit area where the \glspl{SBS} are randomly located. A closer look to communication structure of a cache-enabled \ac{SBS} is shown in the main figure.}
	\label{fig:systemmodel}
\end{figure}

In general, the performance of our system depends on several factors. To meet the \ac{QoE} requirements, the downlink rate provided to the requested user has to be equal or higher than the file bitrate $T$, so that the user does not observe any interruption during its experience. Although this requirement can be achieved in the downlink, yet another bottleneck can be the rate of the backhaul in case of cache misses. In the following, we define our performance metrics which take into account the aforementioned situations. We then present our main results in the same section.
\section{Performance Metrics and Main Results}
\label{sec:permain}
Performance metrics of interest in our system model are the \emph{outage probability} and \emph{average delivery rate}. We start by defining these metrics for the downlink. From now on, without loss of generality, we refer to the user $o$ as \emph{typical} user, which is located at the origin on the plane.

We know that the downlink rate depends on the \ac{SINR}. The \ac{SINR} of user $o$ which is located at a random distance $r$ far away from its \ac{SBS} $b_o$ is given by:
\begin{eqnarray}
	\label{eq:SINR}
	\textrm{SINR} 
	&\triangleq
	\frac{hr^{-\alpha}}{\sigma^2 + I_r},
\end{eqnarray}
where
\begin{eqnarray}
	\label{eq:I_r}
	I_r &\triangleq 
	\sum_{i \in \Phi / b_o}{g_i{R^{-\alpha}_i}},
\end{eqnarray}
is the total interference experienced from all other \glspl{SBS} except the connected \ac{SBS} $b_o$. Assume that the \emph{success probability} is the probability of the downlink rate exceeding the file bitrate $T$ and the probability of requested file being in the local cache. Then, the outage probability can be given as the complementary of the success probability as follows:
\begin{align}
	p_{\text{out}}(\lambda,T,\alpha,S, L, \gamma)  
	&\triangleq
	 1 - \underbrace{\mathbb{P}\Big[\mathrm{ln}(1 + \mathrm{SINR}) > T, f_o \in \Delta_{b_o}\Big]}_\text{success probability},
\end{align}
where $f_o$ is the requested file by the typical user, and $\Delta_{b_o}$ is the local cache of serving \ac{SBS} $b_o$. Indeed, such a definition of the outage probability comes from a simple observation. Ideally, if a requested file is in the cache of the serving \ac{SBS} (thus the limited backhaul is not used) and if the downlink rate is higher than the file bitrate $T$ (thus the user does not observe any interruption during the playback of the file), we then expect the outage probability to be close to zero. Given this explanation and the assumptions made in the previous section, we state the following theorem for outage probability. 
\begin{theorem}[Outage probability]
\label{the:outage-general}
The typical user has an outage probability from its tagged base station which can be expressed as:
\begin{multline}
	p_{\mathrm{out}}(\lambda,T,\alpha,S, L, \gamma)
    = 
	1 - 
	\pi\lambda			
	\int^{\infty}_{0}{}		\times \\
	{
	\int^{S/L}_{0}
			e^{-\pi\lambda v\beta(T,\alpha) - \mu(e^T - 1)\sigma^2v^{\alpha/2}}
			f_{\mathrm{pop}}(f,\gamma)
			\mathrm{d}f
			\mathrm{d}v
	},
\end{multline}
where $\beta(T,\alpha)$ is given by:
\begin{multline}
	\beta(T,\alpha) = \frac{2\left(\mu(e^T - 1)\right)}{\alpha} \times \\
	\mathbb{E}_g\left[
		g^{\frac{2}{\alpha}}
		\left(
			\Gamma\left(-\frac{2}{\alpha},\mu\left(e^T - 1\right)g\right) - 
			\Gamma\left(-\frac{2}{\alpha}\right)
		\right)							
	\right],
\end{multline}
where $\Gamma(a,x) = \int^{\infty}_{x}{t^{a-1}e^{-t}\mathrm{d}t}$ is the upper incomplete Gamma function and $\Gamma(x) = \int^{\infty}_{0}{t^{x-1}e^{-t}\mathrm{d}t}$ is the Gamma function.
\end{theorem}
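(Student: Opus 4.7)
The plan is to exploit the independence of the two events in the success probability and to evaluate each factor with standard stochastic-geometry tools. First, the random file $f_o$ requested by the typical user depends only on the popularity distribution $f_{\mathrm{pop}}(\cdot,\gamma)$ and is independent of $\Phi$, of the desired-link fading $h$, and of the interferer fadings $\{g_i\}$. Hence
\[
\mathbb{P}\bigl[\ln(1+\mathrm{SINR})>T,\,f_o\in\Delta_{b_o}\bigr]
 = \mathbb{P}\bigl[\ln(1+\mathrm{SINR})>T\bigr]\cdot \mathbb{P}[f_o\in\Delta_{b_o}].
\]
Because each SBS stores $S$ nats of content and each file has length $L$ nats, the cache-hit probability is maximised by caching the $S/L$ most popular files, giving $\mathbb{P}[f_o\in\Delta_{b_o}]=\int_0^{S/L} f_{\mathrm{pop}}(f,\gamma)\,df$, which accounts for the inner $df$ integral in the theorem.

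For the SINR coverage factor I would follow the approach of~\cite{Andrews2011Tractable}. Condition on the distance $r$ from the origin to the serving (nearest) SBS; by the void-probability formula of a PPP of intensity $\lambda$, the PDF of $r$ is $2\pi\lambda r\,e^{-\pi\lambda r^2}$. Rewriting the coverage event as $h>(e^T-1)(\sigma^2+I_r)r^\alpha$ and using $h\sim\mathrm{Exp}(\mu)$,
\[
 \mathbb{P}\bigl[h>(e^T-1)(\sigma^2+I_r)r^\alpha \,\big|\, r,I_r\bigr]
 = e^{-\mu(e^T-1)\sigma^2 r^\alpha}\,e^{-\mu(e^T-1)r^\alpha I_r}.
\]
Taking expectation over $I_r$ introduces the Laplace transform $\mathcal{L}_{I_r}(\mu(e^T-1)r^\alpha)$, which, by the PGFL of a PPP restricted to the exterior of the disk of radius $r$ (the serving SBS being the nearest point), equals $\exp\bigl(-2\pi\lambda\int_r^\infty(1-\mathbb{E}_g[e^{-\mu(e^T-1)r^\alpha g\,x^{-\alpha}}])\,x\,dx\bigr)$. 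The substitution $u=(x/r)^2$ factors $r^2$ out of this integral, so the Laplace transform reduces to $\exp(-\pi\lambda r^2\,\tilde\beta(T,\alpha))$ with $\tilde\beta$ independent of $r$.

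I would then evaluate $\tilde\beta(T,\alpha)$ by integration by parts in $u$ followed by the substitution $t=\mu(e^T-1)g\,u^{-\alpha/2}$; this produces the tail integral $\Gamma(-2/\alpha,\mu(e^T-1)g)$ inside $\mathbb{E}_g$, while the $u=1$ boundary contribution combines with the constant~$1$ in the integrand to yield $-\Gamma(-2/\alpha)$ (understood via analytic continuation). Merging this Laplace-transform exponent with the void-probability exponent $-\pi\lambda r^2$ reproduces precisely the $\beta(T,\alpha)$ stated in the theorem. The change of variable $v=r^2$ in the outer integration converts $2\pi\lambda r\,dr$ into $\pi\lambda\,dv$ and turns $r^\alpha$ into $v^{\alpha/2}$; integrating against the PDF of $r$, multiplying by the cache-hit factor, and subtracting from one then yields the claimed formula.

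I expect the principal obstacle to be the last step: reducing the exclusion-zone integral with generic interferer fading $g$ to a difference of upper incomplete Gamma functions requires careful bookkeeping of the $u=1$ boundary term and of the analytic continuation of $\Gamma(-2/\alpha)$, since the pure Rayleigh shortcut of~\cite{Andrews2011Tractable} is unavailable for arbitrary $g$. Two minor but essential prerequisites that should be stated explicitly are Slivnyak's theorem (to place the typical user at the origin with $\Phi$ as the remaining interferer process) and the independence of $f_o$ from the physical-layer randomness.
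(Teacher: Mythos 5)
Your proposal is correct and follows essentially the same route as the paper's proof in Appendix A: the same factorization of the success probability into an independent SINR-coverage term and a cache-hit term $\int_0^{S/L} f_{\mathrm{pop}}(f,\gamma)\,\mathrm{d}f$, the same conditioning on the nearest-SBS distance with the exponential-fading step producing $\mathcal{L}_{I_r}(\mu r^{\alpha}(e^T-1))$, the same PGFL evaluation over the exclusion region $(r,\infty)$ leading to the incomplete-Gamma form of $\beta(T,\alpha)$, and the same final substitution $v=r^2$. The only differences are cosmetic choices of intermediate change of variables and your explicit mention of Slivnyak's theorem and of the independence of $f_o$ from the physical-layer randomness, which the paper leaves implicit.
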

\begin{proof}
	The proof is provided in Appendix \ref{app:outage-general}.
\end{proof}

Yet another useful metric in our system model is the delivery rate, which we define as follows:
\begin{align}
\label{eq:deliveryrate}
\tau 
&\triangleq
\begin{cases}
	T 			, & \text{if } \mathrm{ln}(1 + \mathrm{SINR}) > T \mathrm{\;and\;} f_o \in \Delta_{b_o}, \\
	C(\lambda) 	, & \text{if } \mathrm{ln}(1 + \mathrm{SINR}) > T \mathrm{\;and\;} f_o \not\in \Delta_{b_o}, \\
	0,			& \text{otherwise},
\end{cases}
\hspace{1.4cm}
\text{nats/sec/Hz}
\end{align}
where $C(\lambda)$ is the backhaul capacity provided to the \ac{SBS} for single frequency in the downlink.\footnote{Without loss of generality, more realistic values of delivery rate can be obtained by making a proper \ac{SINR} gap approximation and considering the total wireless bandwidth instead of $1$ Hz.} The definition above can be explained as follows. If the downlink rate is higher than the threshold $T$ (namely the bitrate of the requested file) and the requested file is available in the local cache, the rate $T$ is dedicated to the user by the tagged \ac{SBS}, which in turn is sufficient for \ac{QoE}. On the other hand, if the downlink rate is higher than $T$ but the requested file does not exist in the local cache of the tagged \ac{SBS}, the delivery rate will be limited by the backhaul link capacity $C(\lambda)$, for which we assume that $C(\lambda) < T$. Given this definition for the delivery rate, we state the following theorem.

\begin{theorem}[Average delivery rate]
\label{the:delivery-general}
The typical user has an average delivery rate from its tagged base station which can be expressed as:
\begin{multline}
	{\bar \tau}(\lambda,T,\alpha,S, L, \gamma) 
	= 
		\pi\lambda
			\int^{\infty}_{0}{
				e^{-\pi\lambda v\beta(T,\alpha) - \mu(e^T - 1)\sigma^2v^{\alpha/2}}\mathrm{d}v
		} \times \\
		\left(
			C(\lambda) +
			(T - C(\lambda))
			\int^{S/L}_{0}{
				f_{\mathrm{pop}}(f,\gamma)\mathrm{d}f
			}
		\right),
\end{multline}
where $\beta(T,\alpha)$ has the same definition as in Theorem \ref{the:outage-general}.
\end{theorem}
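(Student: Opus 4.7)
The plan is to express $\bar\tau$ by the definition of expectation for a simple function, then recognize that each piece reduces to probabilities that either appear verbatim in Theorem~\ref{the:outage-general} or can be read off from its proof. Specifically, letting $A$ denote the SINR-coverage event $\{\ln(1+\mathrm{SINR})>T\}$ and $B$ the cache-hit event $\{f_o\in\Delta_{b_o}\}$, the piecewise definition in \eqref{eq:deliveryrate} gives
\begin{equation*}
\bar\tau \;=\; T\,\mathbb{P}[A\cap B] \;+\; C(\lambda)\,\mathbb{P}[A\cap B^{c}]
     \;=\; C(\lambda)\,\mathbb{P}[A] \;+\; \bigl(T-C(\lambda)\bigr)\,\mathbb{P}[A\cap B],
\end{equation*}
where the second equality just uses $\mathbb{P}[A\cap B^{c}]=\mathbb{P}[A]-\mathbb{P}[A\cap B]$. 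This algebraic rearrangement is the key to matching the target form, since the two unknowns inside the parentheses of Theorem~\ref{the:delivery-general} correspond exactly to $\mathbb{P}[A]$ (multiplying $C(\lambda)$) and to $\int_0^{S/L}f_{\mathrm{pop}}(f,\gamma)\,\mathrm{d}f$ (multiplying $T-C(\lambda)$).

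Next I would invoke the independence built into the system model: the file requested by the typical user is drawn according to $f_{\mathrm{pop}}$ independently of the PPP $\Phi$ and of the fading variables $h$, $\{g_i\}$, so $B$ is independent of $A$. Since the cache stores the $S/L$ most popular files, $\mathbb{P}[B]=\int_0^{S/L}f_{\mathrm{pop}}(f,\gamma)\,\mathrm{d}f$. Moreover, the computation carried out in Appendix~\ref{app:outage-general} for the success probability $\mathbb{P}[A\cap B]$ factors as $\mathbb{P}[A]\mathbb{P}[B]$, and inspecting that derivation identifies
\begin{equation*}
\mathbb{P}[A] \;=\; \pi\lambda\int_0^{\infty} e^{-\pi\lambda v\,\beta(T,\alpha)\,-\,\mu(e^T-1)\sigma^2 v^{\alpha/2}}\,\mathrm{d}v,
\end{equation*}
with $\beta(T,\alpha)$ exactly as stated in Theorem~\ref{the:outage-general}. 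Equivalently, this can be extracted by setting the inner $f$-integral in Theorem~\ref{the:outage-general} to one (its value when $S/L\to\infty$), avoiding any new stochastic-geometry calculation.

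Substituting these two expressions into the rearrangement above and pulling the $v$-integral out as a common factor gives precisely the claimed formula. The main obstacle is not analytic but conceptual: one must justify that the cache-hit event is statistically independent of the coverage event, so that $\mathbb{P}[A\cap B]$ factors and the $v$-integral (a function of SBS geometry and fading only) can be cleanly separated from the $f$-integral (a function of the popularity law only). Once that independence is made explicit from the system model assumptions, the remainder is bookkeeping that recycles Theorem~\ref{the:outage-general} rather than reproducing its PGFL/Laplace-transform argument.
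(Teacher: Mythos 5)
Your proposal is correct and follows essentially the same route as the paper's Appendix B: both compute $\bar\tau$ from the piecewise definition as $T\,\mathbb{P}[A\cap B]+C(\lambda)\,\mathbb{P}[A\cap B^{c}]$, invoke the independence of the cache-hit event from the coverage event, recycle the coverage probability $\pi\lambda\int_0^\infty e^{-\pi\lambda v\beta(T,\alpha)-\mu(e^T-1)\sigma^2 v^{\alpha/2}}\,\mathrm{d}v$ from the proof of Theorem~\ref{the:outage-general}, and finish with the algebraic identity $T p + C(\lambda)(1-p) = C(\lambda)+(T-C(\lambda))p$. The only cosmetic difference is that you perform this rearrangement up front rather than after substituting the three expectations.
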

\begin{proof}
	The proof is deferred to Appendix \ref{app:delivery-general}.
\end{proof}
What we provided above are the general results. The exact values of outage probability and average delivery rate can be obtained by specifying the distribution of the interference, the backhaul link capacity $C(\lambda)$ and the file popularity distribution $f_{\mathrm{pop}}(f,\gamma)$. If this treatment does not yield closed form expressions, numerical integration can be done as a last resort for evaluating the functions. In the next section, as an example, we derive special cases of these results after some specific assumptions, which in turn yield much simpler expressions.
\subsection{Special Cases}
\label{sec:specialcases}
\begin{assumption}
\label{ass:special}
The following assumptions are given for the the system model:
\begin{enumerate}
	\item The noise power $\sigma^2$ is higher than $0$, and the pathloss component $\alpha$ is $4$.
	\item Interference is Rayleigh fading, which in turn $g_i \sim \mathrm{Exponential}(\mu)$.
	\item The capacity of backhaul links is given by: 
			\begin{equation}
				C\left(\lambda\right) 
				\triangleq
				\frac{C_1}{\lambda} + C_2,
			\end{equation}
		where $C_1 > 0$ and $C_2 \geq 0$ are some arbitrary coefficients such that $C(\lambda) < T$ holds.
	\item The file popularity distribution of users is characterized by a power law \cite{Newman2005Power} such as:
		\begin{align}
		f_{\mathrm{pop}}\left(f,\gamma\right)
		&\triangleq
		\begin{cases}
			\left(\gamma - 1\right)f^{-\gamma},
					& f \geq 1, \\
			0,			
				& f < 1,
		\end{cases}
		\end{align}
		where $\gamma > 1$ is the shape parameter of the distribution.
\end{enumerate}
\end{assumption}
The assumption $C(\lambda) < T$ comes from the observation that the high-speed fiber-optic backhaul links might be very costly in densely deployed \glspl{SBS} scenarios. Therefore, we assume that $C(\lambda)$ is lower than the bitrate of file. On the other hand, we characterize the file popularity distribution with a power law. Indeed, this  comes from the observation that many real world phenomena can be characterized by power laws (i.e. distribution of files in web proxies, distribution of word counts in natural languages) \cite{Newman2005Power}. According to our system model and the specific assumptions made in Assumption \ref{ass:special}, we state the following results.

\begin{proposition}[Outage probability]
	\label{the:outage-special}
	The typical user has an outage probability from its tagged base station which can be expressed as:
	\begin{multline}
	p_{\mathrm{out}}(\lambda,T,4,S, L, \gamma)  
	=
	1 -
	\frac{\pi^{\frac{3}{2}}\lambda}{\sqrt{\frac{e^T-1}{\mathrm{SNR}}}}
	\mathrm{exp}
		\left(
			\frac{\left(\lambda\pi(1 + \rho(T,4))\right)^2}{4(e^T-1)/\mathrm{SNR}}		\right) \times \\
	Q
		\left(
			\frac{\lambda\pi(1 + \rho(T,4))}{\sqrt{2(e^T-1)/\mathrm{SNR}}}
		\right)
	\left(1 - \left(\frac{L}{L+S}\right)^{\gamma - 1}\right),		
	\end{multline}
	where $\rho(T,4) = \sqrt{e^T - 1}\left(\frac{\pi}{2} - \mathrm{arctan}\left(\frac{1}{\sqrt{e^T-1}}\right) \right)$ and the standard Gaussian tail probability is given as $Q\left(x\right) = \frac{1}{\sqrt{2\pi}}\int_{x}^{\infty}{e^{-y^2/2}\mathrm{d}y}$.
\end{proposition}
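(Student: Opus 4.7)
The strategy is to specialize the general double integral of Theorem~\ref{the:outage-general} to the parameter choices of Assumption~\ref{ass:special} and then decouple the two integrations. Since $f_{\mathrm{pop}}(f,\gamma)$ does not depend on $v$, the double integral factors as a product of (i) a radial integral arising from the nearest-\ac{SBS} distance density and the interference Laplace transform, and (ii) a cache-hit integral, each of which can be evaluated in closed form under the special-case assumptions.

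First, I would evaluate the coefficient $\beta(T,4)$ for $\alpha=4$ and Rayleigh interference $g_i\sim\mathrm{Exponential}(\mu)$. Using the identity $\Gamma(-1/2,x) = -2\sqrt{\pi}\,\mathrm{erfc}(\sqrt{x}) + 2e^{-x}/\sqrt{x}$ and computing the expectation over $g$, or equivalently by re-deriving the interference Laplace transform via the \ac{PGFL} of $\Phi$ and using $\int_{1/\sqrt{e^T-1}}^{\infty}(1+u^2)^{-1}\,\mathrm{d}u = \pi/2 - \arctan(1/\sqrt{e^T-1})$, the interfering-cell contribution collapses to $\rho(T,4) = \sqrt{e^T-1}\bigl(\pi/2 - \arctan(1/\sqrt{e^T-1})\bigr)$. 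Together with the $e^{-\pi\lambda v}$ factor that emerges from the nearest-\ac{SBS} distance density under the substitution $v = r^2$, the radial exponent becomes $-\pi\lambda v\bigl(1+\rho(T,4)\bigr)$.

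Second, the cache-hit integral $\int_{0}^{S/L}(\gamma-1)f^{-\gamma}\,\mathrm{d}f$ is elementary and yields the closed form $1-\bigl(L/(L+S)\bigr)^{\gamma-1}$ after accounting for the support normalization of the power-law PDF. The remaining radial integral has Gaussian form: writing $a = \pi\lambda(1+\rho(T,4))$ and $b = (e^T-1)/\mathrm{SNR}$, it reads $\int_{0}^{\infty} e^{-av-bv^{2}}\,\mathrm{d}v$. Completing the square via $av+bv^{2} = b\bigl(v+a/(2b)\bigr)^{2} - a^{2}/(4b)$ and then substituting $u = \sqrt{b}\,v + a/(2\sqrt{b})$ rewrites it as $\sqrt{\pi/b}\,\exp\!\bigl(a^{2}/(4b)\bigr)\,Q\bigl(a/\sqrt{2b}\bigr)$. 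Multiplying by $\pi\lambda$ and by the cache-hit probability, and substituting the values of $a$ and $b$, produces the claimed expression.

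The main obstacle is the first step: showing that the incomplete-gamma representation of $\beta(T,4)$ in Theorem~\ref{the:outage-general} collapses to the clean trigonometric form $1+\rho(T,4)$ when the interference is Rayleigh faded and $\alpha=4$. Care is needed in recombining the interference Laplace transform with the nearest-\ac{SBS} density Jacobian so that the ``$+1$'' lands in the correct place in the exponent. Everything downstream---separation of the two integrals, the power-law closed form for the cache-hit probability, and the Q-function representation of the Gaussian-type radial integral---is routine calculus.
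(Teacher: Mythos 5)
Your proposal is correct and follows essentially the same route as the paper's proof: factor the success probability into the radial (SINR) term and the cache-hit term, collapse the interference Laplace transform to $\exp(-\pi\lambda r^{2}\rho(T,4))$ via the \ac{PGFL} and the arctangent integral under the Rayleigh assumption, evaluate the resulting $\int_{0}^{\infty}e^{-av-bv^{2}}\,\mathrm{d}v$ as a $Q$-function, and compute the power-law cache-hit integral with the shifted limits. No substantive differences to report.
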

\begin{proof}
	The proof is given in Appendix \ref{app:outage-special}.
\end{proof}

\begin{proposition}[Average delivery rate]
	\label{the:delivery-special}
	The typical user has an average delivery rate from its tagged base station which can be expressed as:
	\begin{multline}
		{\bar \tau}(\lambda,T,4,S, L, \gamma)  
		=
		\frac{\pi^{\frac{3}{2}}\lambda}{\sqrt{\frac{e^T-1}{\mathrm{SNR}}}}
		\mathrm{exp}
			\left(
				\frac{\left(\lambda\pi(1 + \rho(T,4))\right)^2}{4(e^T-1)/\mathrm{SNR}}
			\right) \times \\
		Q
			\left(
				\frac{\lambda\pi(1 + \rho(T,4))}{\sqrt{2(e^T-1)/\mathrm{SNR}}}
			\right) 
		\left(T +  \left(\frac{C_1}{\lambda} + C_2 - T\right)\left(\frac{L}{L+S}\right)^{\gamma - 1}\right),		
	\end{multline}
	where $\rho(T,4)$ and $Q\left(x\right)$ has the same definition as in Proposition \ref{the:outage-special}.
\end{proposition}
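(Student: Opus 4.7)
The plan is to specialize Theorem~\ref{the:delivery-general} under Assumption~\ref{ass:special}, exploiting the fact that the expression there already factorizes as a product of a stochastic-geometry integral in $v$ and a caching factor built from the file-popularity integral and the backhaul capacity $C(\lambda)$. I would simplify the two factors independently and multiply them at the end.

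For the stochastic-geometry factor
\[
\pi\lambda \int_0^\infty e^{-\pi\lambda v \beta(T,4) - \mu(e^T-1)\sigma^2 v^{2}}\, \mathrm{d}v,
\]
I would reuse the computation carried out in the proof of Proposition~\ref{the:outage-special}: with $\alpha=4$ and Rayleigh interference $g_i \sim \mathrm{Exponential}(\mu)$, the expectation defining $\beta(T,\alpha)$ collapses to the closed form $\beta(T,4) = 1 + \rho(T,4)$, where $\rho(T,4) = \sqrt{e^T-1}\,\bigl(\pi/2 - \arctan(1/\sqrt{e^T-1})\bigr)$. The exponent becomes quadratic in $v$, and completing the square turns the integral into a shifted Gaussian tail. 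Writing $a \triangleq \mu(e^T-1)\sigma^2 = (e^T-1)/\mathrm{SNR}$ and $b \triangleq \pi\lambda(1+\rho(T,4))$, the standard manipulation yields $(\pi^{3/2}\lambda/\sqrt{a})\, e^{b^2/(4a)}\, Q(b/\sqrt{2a})$, which is precisely the prefactor appearing in both Proposition~\ref{the:outage-special} and Proposition~\ref{the:delivery-special}.

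For the caching factor, I would evaluate $\int_0^{S/L} f_{\mathrm{pop}}(f,\gamma)\,\mathrm{d}f$ under the power-law specialization of Assumption~\ref{ass:special}. A direct integration gives the cached mass $1 - (L/(L+S))^{\gamma-1}$, consistent with the value obtained in the proof of Proposition~\ref{the:outage-special}. Substituting this together with $C(\lambda) = C_1/\lambda + C_2$ into the bracket $C(\lambda) + (T - C(\lambda))\int_0^{S/L} f_{\mathrm{pop}}(f,\gamma)\,\mathrm{d}f$ and rearranging gives $T + (C_1/\lambda + C_2 - T)(L/(L+S))^{\gamma-1}$. Multiplying this by the stochastic-geometry prefactor above delivers the claim.

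The main obstacle has already been dealt with in Proposition~\ref{the:outage-special}: the nontrivial steps, namely the reduction of $\beta(T,4)$ under Rayleigh interference and the evaluation of the Gaussian integral by completing the square, are shared between the two results and can be invoked verbatim. What distinguishes Proposition~\ref{the:delivery-special} from the outage case is only that the cached mass multiplies $(T - C(\lambda))$ on top of $C(\lambda)$, rather than appearing with a unit coefficient; this is a one-line algebraic rearrangement once the common prefactor has been pulled out, so no genuinely new difficulty arises.
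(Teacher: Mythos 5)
Your proposal is correct and follows essentially the same route as the paper: reuse the Rayleigh-interference Laplace-transform computation and the Gaussian-tail integral from Proposition~\ref{the:outage-special} for the common prefactor, evaluate the power-law cache-hit integral to $1-(L/(L+S))^{\gamma-1}$, and substitute $C(\lambda)=C_1/\lambda+C_2$. The only cosmetic difference is that the paper computes the two caching terms $\mathbb{E}[\tau_2]$ and $\mathbb{E}[\tau_3]$ separately and sums them, whereas you plug the cached mass directly into the factored bracket of Theorem~\ref{the:delivery-general}; the two are algebraically identical.
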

\begin{proof}
	The proof is given in Appendix \ref{app:delivery-special}.
\end{proof}
The expressions obtained for special cases are cumbersome but fairly easy to compute and does not require any integration. Note that $Q\left(x\right)$ function given in the expressions is a well-known function and can be computed by using lookup tables or standard numerical packages.
\section{Validation of the Proposed Model}
\label{sec:validation}
So far we have provided the results for outage probability and average delivery rate. In this section, we validate these results via Monte Carlo simulations. The numerical results shown here are obtained by averaging out over $1000$ realizations. In each realization, the \glspl{SBS} are distributed according to a \ac{PPP}. The file requests, signal and interfering powers of the typical user are drawn randomly according to the corresponding  probability distributions. The outage probability and average delivery rate are then calculated by considering \ac{SINR} and cache hit statistics. We note that all simulation curves match the theoretical ones. However, a slight mismatch is observed due to the fact that more precise discretization of continuous variables is avoided for affordable simulation times. As alluded to previously, the target file bit rate as well as average delivery rate are in units of nats/sec/Hz. On the other hand, the storage size and file lengths are in units of nats.
\subsection{Impact of storage size}
The storage size of \glspl{SBS} is one critical parameter in our system model. The effect of the storage size on the outage probability and the average delivery rate is plotted in Figures \ref{fig:plots-storage-outage} and \ref{fig:plots-storage-delivery}, respectively. Each curve represents a different value of target file bit rate. We observe that the outage probability reduces whereas the average delivery rate increases, as we increase the storage size. Such behavior, observed both in theoretical and simulation curves, confirms our initial intuition. 
\begin{figure}[!h]
\centering
\begin{tikzpicture}[scale=1.25, baseline]
	\begin{axis}[
		grid = major,
		cycle list name=thechair4stocache,
		legend cell align=left,
		legend style ={legend pos=north east},
		xlabel={Storage size $S$ [nats]},
		ylabel=Outage probability]
		
	\addplot+[semithick] plot coordinates {
		(0,	1)
		(0.1,	0.92071)
		(0.2,	0.85463)
		(0.3,	0.79872)
		(0.4,	0.7508)
		(0.5,	0.70927)
		(0.6,	0.67293)
		(0.7,	0.64086)
		(0.8,	0.61236)
		(0.9,	0.58685)
		(1,	0.5639)
		(1.1,	0.54314)
		(1.2,	0.52426)
		(1.3,	0.50702)
		(1.4,	0.49122)
		(1.5,	0.47668)
		(1.6,	0.46326)
		(1.7,	0.45084)
		(1.8,	0.4393)
		(1.9,	0.42856)
		(2,	0.41854)
		(2.1,	0.40916)
		(2.2,	0.40037)
		(2.3,	0.39211)
		(2.4,	0.38433)
		(2.5,	0.377)
		(2.6,	0.37008)
		(2.7,	0.36353)
		(2.8,	0.35733)
		(2.9,	0.35144)
		(3,	0.34585)
		(3.1,	0.34053)
		(3.2,	0.33547)
		(3.3,	0.33064)
		(3.4,	0.32603)
		(3.5,	0.32163)
		(3.6,	0.31741)
		(3.7,	0.31338)
		(3.8,	0.30951)
		(3.9,	0.3058)
		(4,	0.30224)
		(4.1,	0.29882)
		(4.2,	0.29553)
		(4.3,	0.29237)
		(4.4,	0.28932)
		(4.5,	0.28639)
		(4.6,	0.28355)
		(4.7,	0.28082)
		(4.8,	0.27818)
		(4.9,	0.27563)
		(5,	0.27317)
		(5.1,	0.27079)
		(5.2,	0.26848)
		(5.3,	0.26625)
		(5.4,	0.26408)
		(5.5,	0.26199)
		(5.6,	0.25995)
		(5.7,	0.25798)
		(5.8,	0.25607)
		(5.9,	0.25421)
		(6,	0.2524)
		(6.1,	0.25065)
		(6.2,	0.24894)
		(6.3,	0.24728)
		(6.4,	0.24567)
		(6.5,	0.2441)
		(6.6,	0.24257)
		(6.7,	0.24108)
		(6.8,	0.23962)
		(6.9,	0.23821)
		(7,	0.23683)
		(7.1,	0.23548)
		(7.2,	0.23417)
		(7.3,	0.23289)
		(7.4,	0.23164)
		(7.5,	0.23042)
		(7.6,	0.22922)
		(7.7,	0.22806)
		(7.8,	0.22692)
		(7.9,	0.2258)
		(8,	0.22471)
		(8.1,	0.22365)
		(8.2,	0.22261)
		(8.3,	0.22159)
		(8.4,	0.22059)
		(8.5,	0.21961)
		(8.6,	0.21866)
		(8.7,	0.21772)
		(8.8,	0.2168)
		(8.9,	0.2159)
		(9,	0.21502)
		(9.1,	0.21416)
		(9.2,	0.21331)
		(9.3,	0.21248)
		(9.4,	0.21167)
		(9.5,	0.21087)
		(9.6,	0.21009)
		(9.7,	0.20932)
		(9.8,	0.20856)
		(9.9,	0.20782)
		(10,	0.20709)
	}; 	\addlegendentry{$T = 0.1$ (The.)}

	\addplot+[only marks, semithick] plot coordinates {
		(0,	1)
		(1,	0.53984)
		(2,	0.40288)
		(3,	0.33602)
		(4,	0.29547)
		(5,	0.26872)
		(6,	0.24705)
		(7,	0.23245)
		(8,	0.22039)
		(9,	0.21183)
		(10,	0.20344)
	}; 	\addlegendentry{$T = 0.1$ (Sim.)}

	\addplot+[semithick] plot coordinates {
		(0,	1)
		(0.1,	0.92948)
		(0.2,	0.87072)
		(0.3,	0.821)
		(0.4,	0.77838)
		(0.5,	0.74144)
		(0.6,	0.70912)
		(0.7,	0.6806)
		(0.8,	0.65525)
		(0.9,	0.63257)
		(1,	0.61216)
		(1.1,	0.59369)
		(1.2,	0.5769)
		(1.3,	0.56157)
		(1.4,	0.54752)
		(1.5,	0.53459)
		(1.6,	0.52266)
		(1.7,	0.51161)
		(1.8,	0.50135)
		(1.9,	0.4918)
		(2,	0.48288)
		(2.1,	0.47454)
		(2.2,	0.46672)
		(2.3,	0.45938)
		(2.4,	0.45246)
		(2.5,	0.44594)
		(2.6,	0.43979)
		(2.7,	0.43397)
		(2.8,	0.42845)
		(2.9,	0.42321)
		(3,	0.41824)
		(3.1,	0.41351)
		(3.2,	0.40901)
		(3.3,	0.40471)
		(3.4,	0.40061)
		(3.5,	0.3967)
		(3.6,	0.39295)
		(3.7,	0.38936)
		(3.8,	0.38592)
		(3.9,	0.38262)
		(4,	0.37946)
		(4.1,	0.37642)
		(4.2,	0.37349)
		(4.3,	0.37068)
		(4.4,	0.36797)
		(4.5,	0.36536)
		(4.6,	0.36284)
		(4.7,	0.36041)
		(4.8,	0.35806)
		(4.9,	0.35579)
		(5,	0.3536)
		(5.1,	0.35148)
		(5.2,	0.34943)
		(5.3,	0.34745)
		(5.4,	0.34552)
		(5.5,	0.34366)
		(5.6,	0.34185)
		(5.7,	0.3401)
		(5.8,	0.33839)
		(5.9,	0.33674)
		(6,	0.33513)
		(6.1,	0.33357)
		(6.2,	0.33206)
		(6.3,	0.33058)
		(6.4,	0.32914)
		(6.5,	0.32775)
		(6.6,	0.32639)
		(6.7,	0.32506)
		(6.8,	0.32377)
		(6.9,	0.32251)
		(7,	0.32128)
		(7.1,	0.32009)
		(7.2,	0.31892)
		(7.3,	0.31778)
		(7.4,	0.31667)
		(7.5,	0.31558)
		(7.6,	0.31452)
		(7.7,	0.31348)
		(7.8,	0.31247)
		(7.9,	0.31148)
		(8,	0.31051)
		(8.1,	0.30956)
		(8.2,	0.30864)
		(8.3,	0.30773)
		(8.4,	0.30684)
		(8.5,	0.30597)
		(8.6,	0.30512)
		(8.7,	0.30429)
		(8.8,	0.30347)
		(8.9,	0.30267)
		(9,	0.30189)
		(9.1,	0.30112)
		(9.2,	0.30037)
		(9.3,	0.29963)
		(9.4,	0.29891)
		(9.5,	0.2982)
		(9.6,	0.2975)
		(9.7,	0.29682)
		(9.8,	0.29614)
		(9.9,	0.29549)
		(10,	0.29484)
	}; 	\addlegendentry{$T = 0.2$ (The.)}

	\addplot+[only marks, semithick] plot coordinates {
		(0,	1)
		(1,	0.5788)
		(2,	0.45344)
		(3,	0.39224)
		(4,	0.35512)
		(5,	0.33064)
		(6,	0.3108)
		(7,	0.29744)
		(8,	0.2864)
		(9,	0.27856)
		(10,	0.27088)
	}; 	\addlegendentry{$T = 0.2$ (Sim.)}

	\addplot+[semithick] plot coordinates {
		(0,	1)
		(0.1,	0.94229)
		(0.2,	0.89419)
		(0.3,	0.8535)
		(0.4,	0.81862)
		(0.5,	0.78839)
		(0.6,	0.76194)
		(0.7,	0.7386)
		(0.8,	0.71785)
		(0.9,	0.69929)
		(1,	0.68258)
		(1.1,	0.66747)
		(1.2,	0.65373)
		(1.3,	0.64118)
		(1.4,	0.62968)
		(1.5,	0.6191)
		(1.6,	0.60933)
		(1.7,	0.60029)
		(1.8,	0.59189)
		(1.9,	0.58408)
		(2,	0.57678)
		(2.1,	0.56995)
		(2.2,	0.56355)
		(2.3,	0.55754)
		(2.4,	0.55188)
		(2.5,	0.54655)
		(2.6,	0.54151)
		(2.7,	0.53674)
		(2.8,	0.53223)
		(2.9,	0.52794)
		(3,	0.52388)
		(3.1,	0.52)
		(3.2,	0.51632)
		(3.3,	0.5128)
		(3.4,	0.50945)
		(3.5,	0.50624)
		(3.6,	0.50317)
		(3.7,	0.50024)
		(3.8,	0.49742)
		(3.9,	0.49472)
		(4,	0.49213)
		(4.1,	0.48964)
		(4.2,	0.48725)
		(4.3,	0.48495)
		(4.4,	0.48273)
		(4.5,	0.48059)
		(4.6,	0.47853)
		(4.7,	0.47654)
		(4.8,	0.47462)
		(4.9,	0.47277)
		(5,	0.47097)
		(5.1,	0.46924)
		(5.2,	0.46756)
		(5.3,	0.46593)
		(5.4,	0.46436)
		(5.5,	0.46283)
		(5.6,	0.46135)
		(5.7,	0.45992)
		(5.8,	0.45852)
		(5.9,	0.45717)
		(6,	0.45586)
		(6.1,	0.45458)
		(6.2,	0.45334)
		(6.3,	0.45213)
		(6.4,	0.45096)
		(6.5,	0.44981)
		(6.6,	0.4487)
		(6.7,	0.44761)
		(6.8,	0.44656)
		(6.9,	0.44553)
		(7,	0.44452)
		(7.1,	0.44354)
		(7.2,	0.44259)
		(7.3,	0.44165)
		(7.4,	0.44074)
		(7.5,	0.43985)
		(7.6,	0.43898)
		(7.7,	0.43814)
		(7.8,	0.43731)
		(7.9,	0.4365)
		(8,	0.4357)
		(8.1,	0.43493)
		(8.2,	0.43417)
		(8.3,	0.43343)
		(8.4,	0.4327)
		(8.5,	0.43199)
		(8.6,	0.4313)
		(8.7,	0.43061)
		(8.8,	0.42995)
		(8.9,	0.42929)
		(9,	0.42865)
		(9.1,	0.42802)
		(9.2,	0.42741)
		(9.3,	0.4268)
		(9.4,	0.42621)
		(9.5,	0.42563)
		(9.6,	0.42506)
		(9.7,	0.4245)
		(9.8,	0.42395)
		(9.9,	0.42341)
		(10,	0.42288)
	}; 	\addlegendentry{$T = 0.4$ (The.)}

	\addplot+[only marks, semithick] plot coordinates {
		(0,	1)
		(1,	0.65093)
		(2,	0.54704)
		(3,	0.49632)
		(4,	0.46556)
		(5,	0.44527)
		(6,	0.42883)
		(7,	0.41775)
		(8,	0.4086)
		(9,	0.40211)
		(10,	0.39574)
	}; 	\addlegendentry{$T = 0.4$ (Sim.)}

	\end{axis}
\end{tikzpicture}
\caption{The evolution of outage probability with respect to the storage size. $\mathrm{SNR} = 10$ dB, $\lambda = 0.2$, $\gamma = 2$, $L = 1$ nats, $\alpha = 4$, $C_1 = 0.0005$, $C_2 = 0$.}
\label{fig:plots-storage-outage}
\vspace{0.35cm}
\end{figure}
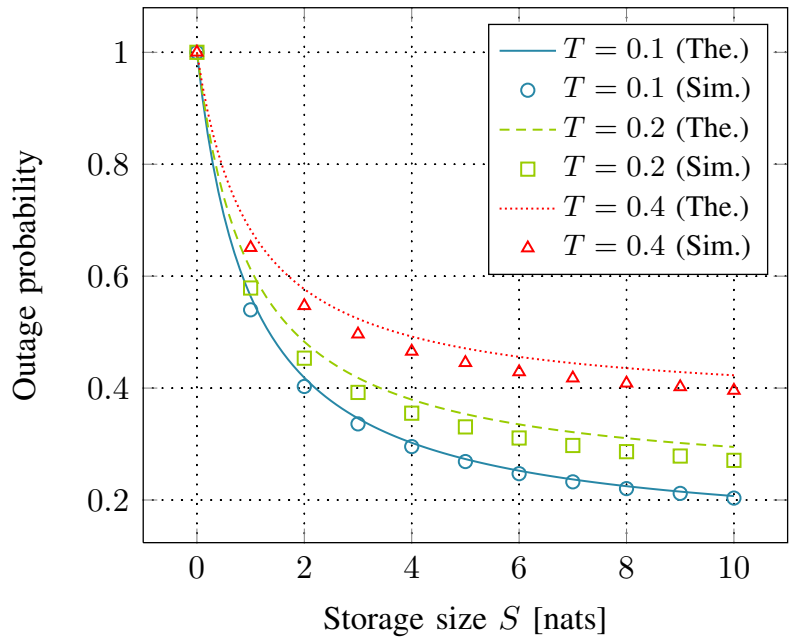
\begin{figure}[!h]
\centering
\begin{tikzpicture}[scale=1.25, baseline]
	\begin{axis}[
		grid = major,
		cycle list name=thechair4stocache,
		legend cell align=left,
		legend style ={legend pos=south east},
		xlabel={Storage size $S$ [nats]},
		ylabel={Avg. delivery rate [nats/sec/Hz]}]
		
	\addplot+[semithick] plot coordinates {
		(0,	0.0021805)
		(0.1,	0.0099113)
		(0.2,	0.016354)
		(0.3,	0.021805)
		(0.4,	0.026477)
		(0.5,	0.030527)
		(0.6,	0.03407)
		(0.7,	0.037197)
		(0.8,	0.039976)
		(0.9,	0.042462)
		(1,	0.0447)
		(1.1,	0.046725)
		(1.2,	0.048565)
		(1.3,	0.050246)
		(1.4,	0.051787)
		(1.5,	0.053204)
		(1.6,	0.054512)
		(1.7,	0.055724)
		(1.8,	0.056848)
		(1.9,	0.057896)
		(2,	0.058873)
		(2.1,	0.059788)
		(2.2,	0.060645)
		(2.3,	0.06145)
		(2.4,	0.062208)
		(2.5,	0.062923)
		(2.6,	0.063598)
		(2.7,	0.064236)
		(2.8,	0.064841)
		(2.9,	0.065415)
		(3,	0.06596)
		(3.1,	0.066478)
		(3.2,	0.066972)
		(3.3,	0.067443)
		(3.4,	0.067893)
		(3.5,	0.068322)
		(3.6,	0.068733)
		(3.7,	0.069126)
		(3.8,	0.069503)
		(3.9,	0.069865)
		(4,	0.070212)
		(4.1,	0.070545)
		(4.2,	0.070866)
		(4.3,	0.071174)
		(4.4,	0.071472)
		(4.5,	0.071758)
		(4.6,	0.072034)
		(4.7,	0.0723)
		(4.8,	0.072558)
		(4.9,	0.072806)
		(5,	0.073046)
		(5.1,	0.073279)
		(5.2,	0.073504)
		(5.3,	0.073721)
		(5.4,	0.073932)
		(5.5,	0.074137)
		(5.6,	0.074335)
		(5.7,	0.074527)
		(5.8,	0.074714)
		(5.9,	0.074895)
		(6,	0.075071)
		(6.1,	0.075242)
		(6.2,	0.075409)
		(6.3,	0.07557)
		(6.4,	0.075728)
		(6.5,	0.075881)
		(6.6,	0.07603)
		(6.7,	0.076176)
		(6.8,	0.076317)
		(6.9,	0.076455)
		(7,	0.07659)
		(7.1,	0.076721)
		(7.2,	0.076849)
		(7.3,	0.076974)
		(7.4,	0.077096)
		(7.5,	0.077215)
		(7.6,	0.077331)
		(7.7,	0.077445)
		(7.8,	0.077556)
		(7.9,	0.077665)
		(8,	0.077771)
		(8.1,	0.077875)
		(8.2,	0.077976)
		(8.3,	0.078076)
		(8.4,	0.078173)
		(8.5,	0.078268)
		(8.6,	0.078361)
		(8.7,	0.078453)
		(8.8,	0.078542)
		(8.9,	0.07863)
		(9,	0.078716)
		(9.1,	0.0788)
		(9.2,	0.078882)
		(9.3,	0.078963)
		(9.4,	0.079043)
		(9.5,	0.079121)
		(9.6,	0.079197)
		(9.7,	0.079272)
		(9.8,	0.079346)
		(9.9,	0.079418)
		(10,	0.079489)
	}; 	\addlegendentry{$T = 0.1$ (The.)}

	\addplot+[only marks,semithick] plot coordinates {
		(0,	0.002185)
		(1,	0.047051)
		(2,	0.060404)
		(3,	0.066923)
		(4,	0.070877)
		(5,	0.073484)
		(6,	0.075598)
		(7,	0.077021)
		(8,	0.078197)
		(9,	0.079032)
		(10,	0.07985)
	}; 	\addlegendentry{$T = 0.1$ (Sim.)}

	\addplot+[semithick] plot coordinates {
		(0,	0.0019392)
		(0.1,	0.015866)
		(0.2,	0.027472)
		(0.3,	0.037292)
		(0.4,	0.04571)
		(0.5,	0.053005)
		(0.6,	0.059388)
		(0.7,	0.06502)
		(0.8,	0.070026)
		(0.9,	0.074506)
		(1,	0.078537)
		(1.1,	0.082185)
		(1.2,	0.085501)
		(1.3,	0.088528)
		(1.4,	0.091304)
		(1.5,	0.093857)
		(1.6,	0.096214)
		(1.7,	0.098396)
		(1.8,	0.10042)
		(1.9,	0.10231)
		(2,	0.10407)
		(2.1,	0.10572)
		(2.2,	0.10726)
		(2.3,	0.10871)
		(2.4,	0.11008)
		(2.5,	0.11137)
		(2.6,	0.11258)
		(2.7,	0.11373)
		(2.8,	0.11482)
		(2.9,	0.11585)
		(3,	0.11684)
		(3.1,	0.11777)
		(3.2,	0.11866)
		(3.3,	0.11951)
		(3.4,	0.12032)
		(3.5,	0.12109)
		(3.6,	0.12183)
		(3.7,	0.12254)
		(3.8,	0.12322)
		(3.9,	0.12387)
		(4,	0.1245)
		(4.1,	0.1251)
		(4.2,	0.12567)
		(4.3,	0.12623)
		(4.4,	0.12677)
		(4.5,	0.12728)
		(4.6,	0.12778)
		(4.7,	0.12826)
		(4.8,	0.12872)
		(4.9,	0.12917)
		(5,	0.1296)
		(5.1,	0.13002)
		(5.2,	0.13043)
		(5.3,	0.13082)
		(5.4,	0.1312)
		(5.5,	0.13157)
		(5.6,	0.13192)
		(5.7,	0.13227)
		(5.8,	0.13261)
		(5.9,	0.13293)
		(6,	0.13325)
		(6.1,	0.13356)
		(6.2,	0.13386)
		(6.3,	0.13415)
		(6.4,	0.13443)
		(6.5,	0.13471)
		(6.6,	0.13498)
		(6.7,	0.13524)
		(6.8,	0.13549)
		(6.9,	0.13574)
		(7,	0.13599)
		(7.1,	0.13622)
		(7.2,	0.13645)
		(7.3,	0.13668)
		(7.4,	0.1369)
		(7.5,	0.13711)
		(7.6,	0.13732)
		(7.7,	0.13753)
		(7.8,	0.13773)
		(7.9,	0.13792)
		(8,	0.13811)
		(8.1,	0.1383)
		(8.2,	0.13848)
		(8.3,	0.13866)
		(8.4,	0.13884)
		(8.5,	0.13901)
		(8.6,	0.13918)
		(8.7,	0.13934)
		(8.8,	0.1395)
		(8.9,	0.13966)
		(9,	0.13982)
		(9.1,	0.13997)
		(9.2,	0.14012)
		(9.3,	0.14026)
		(9.4,	0.14041)
		(9.5,	0.14055)
		(9.6,	0.14068)
		(9.7,	0.14082)
		(9.8,	0.14095)
		(9.9,	0.14108)
		(10,	0.14121)
	}; 	\addlegendentry{$T = 0.2$ (The.)}

	\addplot+[only marks,semithick] plot coordinates {
		(0,	0.002)
		(1,	0.085187)
		(2,	0.10995)
		(3,	0.12203)
		(4,	0.12936)
		(5,	0.1342)
		(6,	0.13812)
		(7,	0.14076)
		(8,	0.14294)
		(9,	0.14448)
		(10,	0.146)
	}; 	\addlegendentry{$T = 0.2$ (Sim.)}

	\addplot+[semithick] plot coordinates {
		(0,	0.0015871)
		(0.1,	0.024528)
		(0.2,	0.043645)
		(0.3,	0.059821)
		(0.4,	0.073686)
		(0.5,	0.085702)
		(0.6,	0.096217)
		(0.7,	0.10549)
		(0.8,	0.11374)
		(0.9,	0.12112)
		(1,	0.12776)
		(1.1,	0.13377)
		(1.2,	0.13923)
		(1.3,	0.14422)
		(1.4,	0.14879)
		(1.5,	0.15299)
		(1.6,	0.15688)
		(1.7,	0.16047)
		(1.8,	0.16381)
		(1.9,	0.16692)
		(2,	0.16982)
		(2.1,	0.17253)
		(2.2,	0.17507)
		(2.3,	0.17746)
		(2.4,	0.17971)
		(2.5,	0.18183)
		(2.6,	0.18384)
		(2.7,	0.18573)
		(2.8,	0.18753)
		(2.9,	0.18923)
		(3,	0.19085)
		(3.1,	0.19239)
		(3.2,	0.19385)
		(3.3,	0.19525)
		(3.4,	0.19658)
		(3.5,	0.19786)
		(3.6,	0.19908)
		(3.7,	0.20024)
		(3.8,	0.20136)
		(3.9,	0.20243)
		(4,	0.20346)
		(4.1,	0.20445)
		(4.2,	0.20541)
		(4.3,	0.20632)
		(4.4,	0.2072)
		(4.5,	0.20805)
		(4.6,	0.20887)
		(4.7,	0.20966)
		(4.8,	0.21043)
		(4.9,	0.21116)
		(5,	0.21188)
		(5.1,	0.21256)
		(5.2,	0.21323)
		(5.3,	0.21388)
		(5.4,	0.2145)
		(5.5,	0.21511)
		(5.6,	0.2157)
		(5.7,	0.21627)
		(5.8,	0.21682)
		(5.9,	0.21736)
		(6,	0.21788)
		(6.1,	0.21839)
		(6.2,	0.21889)
		(6.3,	0.21937)
		(6.4,	0.21983)
		(6.5,	0.22029)
		(6.6,	0.22073)
		(6.7,	0.22116)
		(6.8,	0.22158)
		(6.9,	0.22199)
		(7,	0.22239)
		(7.1,	0.22278)
		(7.2,	0.22316)
		(7.3,	0.22353)
		(7.4,	0.22389)
		(7.5,	0.22425)
		(7.6,	0.22459)
		(7.7,	0.22493)
		(7.8,	0.22526)
		(7.9,	0.22558)
		(8,	0.22589)
		(8.1,	0.2262)
		(8.2,	0.2265)
		(8.3,	0.2268)
		(8.4,	0.22709)
		(8.5,	0.22737)
		(8.6,	0.22765)
		(8.7,	0.22792)
		(8.8,	0.22818)
		(8.9,	0.22844)
		(9,	0.2287)
		(9.1,	0.22895)
		(9.2,	0.22919)
		(9.3,	0.22943)
		(9.4,	0.22967)
		(9.5,	0.2299)
		(9.6,	0.23013)
		(9.7,	0.23035)
		(9.8,	0.23057)
		(9.9,	0.23078)
		(10,	0.23099)
	}; 	\addlegendentry{$T = 0.4$ (The.)}

	\addplot+[only marks,semithick] plot coordinates {
		(0,	0.0016575)
		(1,	0.14041)
		(2,	0.18171)
		(3,	0.20187)
		(4,	0.2141)
		(5,	0.22216)
		(6,	0.2287)
		(7,	0.2331)
		(8,	0.23674)
		(9,	0.23932)
		(10,	0.24185)
	}; 	\addlegendentry{$T = 0.4$ (Sim.)}

	\end{axis}
\end{tikzpicture}
\caption{The evolution of  average delivery rate with respect to the storage size. $\mathrm{SNR} = 10$ dB, $\lambda = 0.2$, $\gamma = 2$, $L = 1$ nats, $\alpha = 4$, $C_1 = 0.0005$, $C_2 = 0$.}
\label{fig:plots-storage-delivery}
\vspace{0.35cm}
\end{figure}
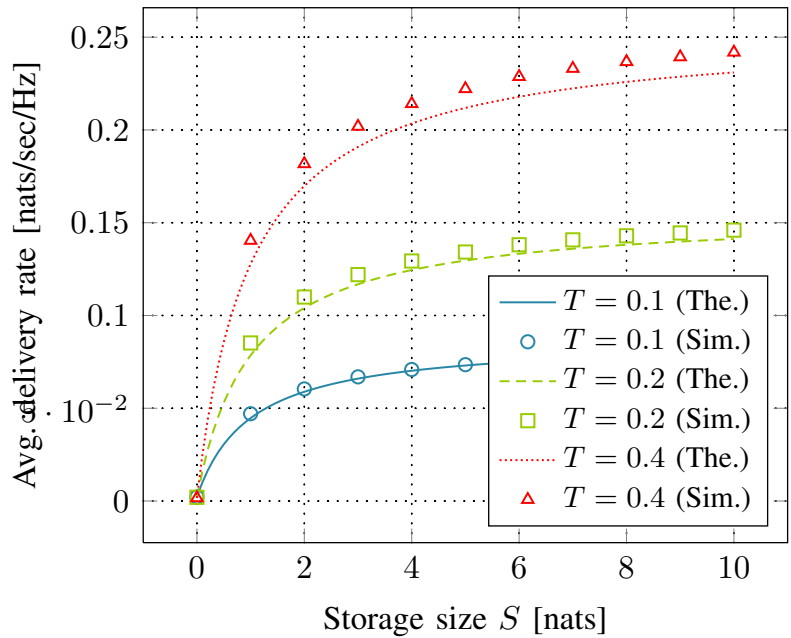
\subsection{Impact of the number of base stations}
The evolution of outage probability with respect to the number of base stations is depicted in Figure \ref{fig:plots-base}. As the base station density increases, the outage probability decreases. This decrement in outage probability can be improved further by increasing the storage size of \glspl{SBS}.
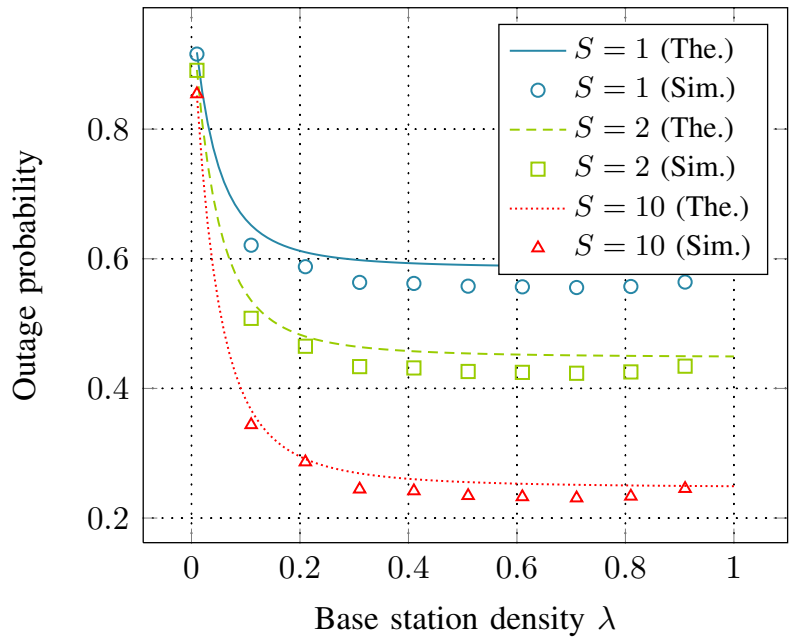
\begin{figure}[!h]
\centering
\begin{tikzpicture}[scale=1.25, baseline]
	\begin{axis}[
		grid = major,
		cycle list name=thechair4stocache,
		legend cell align=left,
		legend style ={legend pos=north east},
		xlabel={Base station density $\lambda$},
		ylabel=Outage probability]
		
	\addplot+[semithick] plot coordinates {
		(0.01,	0.91851)
		(0.02,	0.85653)
		(0.03,	0.80874)
		(0.04,	0.77143)
		(0.05,	0.74195)
		(0.06,	0.7184)
		(0.07,	0.69938)
		(0.08,	0.68388)
		(0.09,	0.67112)
		(0.1,	0.66054)
		(0.11,	0.65168)
		(0.12,	0.64421)
		(0.13,	0.63787)
		(0.14,	0.63246)
		(0.15,	0.6278)
		(0.16,	0.62378)
		(0.17,	0.62028)
		(0.18,	0.61722)
		(0.19,	0.61453)
		(0.2,	0.61216)
		(0.21,	0.61006)
		(0.22,	0.60819)
		(0.23,	0.60653)
		(0.24,	0.60503)
		(0.25,	0.60369)
		(0.26,	0.60248)
		(0.27,	0.60138)
		(0.28,	0.60039)
		(0.29,	0.59948)
		(0.3,	0.59865)
		(0.31,	0.5979)
		(0.32,	0.59721)
		(0.33,	0.59657)
		(0.34,	0.59598)
		(0.35,	0.59544)
		(0.36,	0.59494)
		(0.37,	0.59448)
		(0.38,	0.59405)
		(0.39,	0.59365)
		(0.4,	0.59328)
		(0.41,	0.59294)
		(0.42,	0.59261)
		(0.43,	0.59231)
		(0.44,	0.59203)
		(0.45,	0.59176)
		(0.46,	0.59151)
		(0.47,	0.59128)
		(0.48,	0.59106)
		(0.49,	0.59085)
		(0.5,	0.59066)
		(0.51,	0.59047)
		(0.52,	0.5903)
		(0.53,	0.59013)
		(0.54,	0.58997)
		(0.55,	0.58982)
		(0.56,	0.58968)
		(0.57,	0.58955)
		(0.58,	0.58942)
		(0.59,	0.5893)
		(0.6,	0.58919)
		(0.61,	0.58908)
		(0.62,	0.58897)
		(0.63,	0.58887)
		(0.64,	0.58878)
		(0.65,	0.58869)
		(0.66,	0.5886)
		(0.67,	0.58852)
		(0.68,	0.58844)
		(0.69,	0.58836)
		(0.7,	0.58829)
		(0.71,	0.58822)
		(0.72,	0.58815)
		(0.73,	0.58808)
		(0.74,	0.58802)
		(0.75,	0.58796)
		(0.76,	0.5879)
		(0.77,	0.58785)
		(0.78,	0.5878)
		(0.79,	0.58774)
		(0.8,	0.58769)
		(0.81,	0.58765)
		(0.82,	0.5876)
		(0.83,	0.58756)
		(0.84,	0.58751)
		(0.85,	0.58747)
		(0.86,	0.58743)
		(0.87,	0.58739)
		(0.88,	0.58736)
		(0.89,	0.58732)
		(0.9,	0.58729)
		(0.91,	0.58725)
		(0.92,	0.58722)
		(0.93,	0.58719)
		(0.94,	0.58716)
		(0.95,	0.58713)
		(0.96,	0.5871)
		(0.97,	0.58707)
		(0.98,	0.58705)
		(0.99,	0.58702)
		(1,	0.58699)
	}; 	\addlegendentry{$S = 1$ (The.)}

	\addplot+[only marks, semithick] plot coordinates {
		(0.01,	0.91576)
		(0.11,	0.62092)
		(0.21,	0.58775)
		(0.31,	0.56353)
		(0.41,	0.56195)
		(0.51,	0.55774)
		(0.61,	0.55669)
		(0.71,	0.55563)
		(0.81,	0.55721)
		(0.91,	0.56406)
	}; 	\addlegendentry{$S = 1$ (Sim.)}

	\addplot+[semithick] plot coordinates {
		(0.01,	0.89135)
		(0.02,	0.8087)
		(0.03,	0.74499)
		(0.04,	0.69524)
		(0.05,	0.65594)
		(0.06,	0.62453)
		(0.07,	0.59918)
		(0.08,	0.57851)
		(0.09,	0.5615)
		(0.1,	0.54738)
		(0.11,	0.53557)
		(0.12,	0.52562)
		(0.13,	0.51717)
		(0.14,	0.50995)
		(0.15,	0.50374)
		(0.16,	0.49837)
		(0.17,	0.4937)
		(0.18,	0.48962)
		(0.19,	0.48604)
		(0.2,	0.48288)
		(0.21,	0.48008)
		(0.22,	0.47759)
		(0.23,	0.47537)
		(0.24,	0.47338)
		(0.25,	0.47158)
		(0.26,	0.46997)
		(0.27,	0.46851)
		(0.28,	0.46718)
		(0.29,	0.46597)
		(0.3,	0.46487)
		(0.31,	0.46386)
		(0.32,	0.46294)
		(0.33,	0.46209)
		(0.34,	0.46131)
		(0.35,	0.46059)
		(0.36,	0.45993)
		(0.37,	0.45931)
		(0.38,	0.45874)
		(0.39,	0.45821)
		(0.4,	0.45771)
		(0.41,	0.45725)
		(0.42,	0.45682)
		(0.43,	0.45641)
		(0.44,	0.45604)
		(0.45,	0.45568)
		(0.46,	0.45535)
		(0.47,	0.45504)
		(0.48,	0.45474)
		(0.49,	0.45447)
		(0.5,	0.45421)
		(0.51,	0.45396)
		(0.52,	0.45373)
		(0.53,	0.45351)
		(0.54,	0.4533)
		(0.55,	0.4531)
		(0.56,	0.45291)
		(0.57,	0.45273)
		(0.58,	0.45256)
		(0.59,	0.4524)
		(0.6,	0.45225)
		(0.61,	0.4521)
		(0.62,	0.45196)
		(0.63,	0.45183)
		(0.64,	0.4517)
		(0.65,	0.45158)
		(0.66,	0.45146)
		(0.67,	0.45135)
		(0.68,	0.45125)
		(0.69,	0.45115)
		(0.7,	0.45105)
		(0.71,	0.45095)
		(0.72,	0.45086)
		(0.73,	0.45078)
		(0.74,	0.45069)
		(0.75,	0.45062)
		(0.76,	0.45054)
		(0.77,	0.45046)
		(0.78,	0.45039)
		(0.79,	0.45033)
		(0.8,	0.45026)
		(0.81,	0.4502)
		(0.82,	0.45014)
		(0.83,	0.45008)
		(0.84,	0.45002)
		(0.85,	0.44996)
		(0.86,	0.44991)
		(0.87,	0.44986)
		(0.88,	0.44981)
		(0.89,	0.44976)
		(0.9,	0.44972)
		(0.91,	0.44967)
		(0.92,	0.44963)
		(0.93,	0.44959)
		(0.94,	0.44954)
		(0.95,	0.44951)
		(0.96,	0.44947)
		(0.97,	0.44943)
		(0.98,	0.44939)
		(0.99,	0.44936)
		(1,	0.44932)
	}; 	\addlegendentry{$S = 2$ (The.)}

	\addplot+[only marks, semithick] plot coordinates {
		(0.01,	0.89069)
		(0.11,	0.5081)
		(0.21,	0.46505)
		(0.31,	0.43363)
		(0.41,	0.43158)
		(0.51,	0.42611)
		(0.61,	0.42475)
		(0.71,	0.42338)
		(0.81,	0.42543)
		(0.91,	0.43431)
	}; 	\addlegendentry{$S = 2$ (Sim.)}

	\addplot+[semithick] plot coordinates {
		(0.01,	0.85184)
		(0.02,	0.73914)
		(0.03,	0.65226)
		(0.04,	0.58442)
		(0.05,	0.53082)
		(0.06,	0.488)
		(0.07,	0.45343)
		(0.08,	0.42524)
		(0.09,	0.40204)
		(0.1,	0.38279)
		(0.11,	0.36669)
		(0.12,	0.35311)
		(0.13,	0.34159)
		(0.14,	0.33175)
		(0.15,	0.32328)
		(0.16,	0.31596)
		(0.17,	0.3096)
		(0.18,	0.30403)
		(0.19,	0.29915)
		(0.2,	0.29484)
		(0.21,	0.29102)
		(0.22,	0.28763)
		(0.23,	0.28459)
		(0.24,	0.28188)
		(0.25,	0.27943)
		(0.26,	0.27723)
		(0.27,	0.27524)
		(0.28,	0.27343)
		(0.29,	0.27178)
		(0.3,	0.27028)
		(0.31,	0.26891)
		(0.32,	0.26765)
		(0.33,	0.26649)
		(0.34,	0.26543)
		(0.35,	0.26444)
		(0.36,	0.26354)
		(0.37,	0.26269)
		(0.38,	0.26191)
		(0.39,	0.26119)
		(0.4,	0.26051)
		(0.41,	0.25988)
		(0.42,	0.2593)
		(0.43,	0.25875)
		(0.44,	0.25823)
		(0.45,	0.25775)
		(0.46,	0.2573)
		(0.47,	0.25687)
		(0.48,	0.25647)
		(0.49,	0.25609)
		(0.5,	0.25574)
		(0.51,	0.2554)
		(0.52,	0.25508)
		(0.53,	0.25478)
		(0.54,	0.2545)
		(0.55,	0.25423)
		(0.56,	0.25397)
		(0.57,	0.25373)
		(0.58,	0.2535)
		(0.59,	0.25328)
		(0.6,	0.25307)
		(0.61,	0.25287)
		(0.62,	0.25268)
		(0.63,	0.25249)
		(0.64,	0.25232)
		(0.65,	0.25216)
		(0.66,	0.252)
		(0.67,	0.25185)
		(0.68,	0.2517)
		(0.69,	0.25156)
		(0.7,	0.25143)
		(0.71,	0.2513)
		(0.72,	0.25118)
		(0.73,	0.25106)
		(0.74,	0.25095)
		(0.75,	0.25084)
		(0.76,	0.25073)
		(0.77,	0.25063)
		(0.78,	0.25054)
		(0.79,	0.25044)
		(0.8,	0.25035)
		(0.81,	0.25027)
		(0.82,	0.25018)
		(0.83,	0.2501)
		(0.84,	0.25003)
		(0.85,	0.24995)
		(0.86,	0.24988)
		(0.87,	0.24981)
		(0.88,	0.24974)
		(0.89,	0.24968)
		(0.9,	0.24961)
		(0.91,	0.24955)
		(0.92,	0.24949)
		(0.93,	0.24943)
		(0.94,	0.24938)
		(0.95,	0.24933)
		(0.96,	0.24927)
		(0.97,	0.24922)
		(0.98,	0.24917)
		(0.99,	0.24913)
		(1,	0.24908)
	}; 	\addlegendentry{$S = 10$ (The.)}

	\addplot+[only marks, semithick] plot coordinates {
		(0.01,	0.85418)
		(0.11,	0.34379)
		(0.21,	0.28637)
		(0.31,	0.24445)
		(0.41,	0.24172)
		(0.51,	0.23442)
		(0.61,	0.2326)
		(0.71,	0.23078)
		(0.81,	0.23351)
		(0.91,	0.24536)
	}; 	\addlegendentry{$S = 10$ (Sim.)}

	\end{axis}
\end{tikzpicture}
\caption{The evolution of outage probability with respect to the base station density. $\mathrm{SNR} = 10$ dB, $T = 0.2$, $\gamma = 2$, $L = 1$ nats, $\alpha = 4$, $C_1 = 0.0005$, $C_2 = 0$.}
\label{fig:plots-base}
\vspace{0.35cm}
\end{figure}
\subsection{Impact of target file bitrate}
Yet another important parameter in our setup is the target file bitrate $T$. Figure \ref{fig:plots-target} shows its impact on the outage probability for different values of storage size. Clearly, increasing the target file bitrate results in higher outage probability. However, this performance reduction can be compensated by increasing the storage size of \glspl{SBS}. The impact of storage size reduces, as $T$ increases.
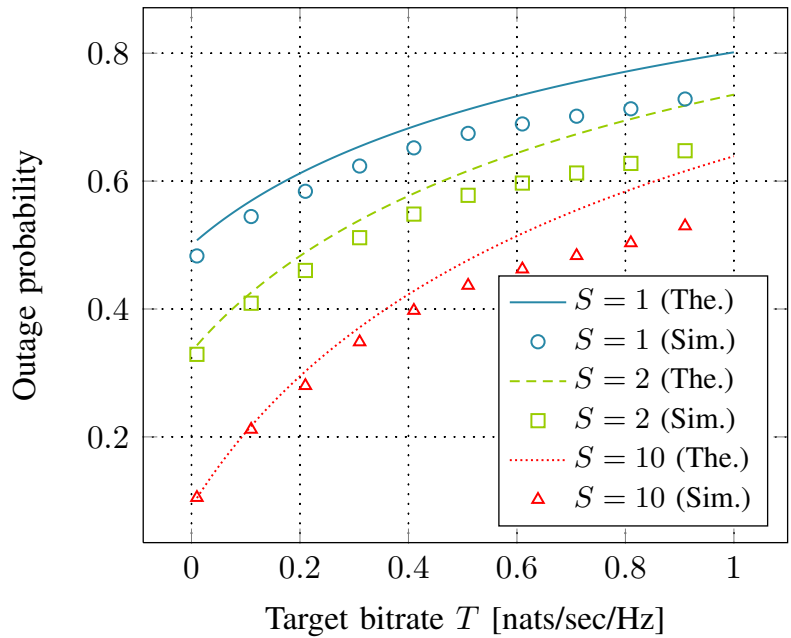
\begin{figure}[!h]
\centering
\begin{tikzpicture}[scale=1.25, baseline]
	\begin{axis}[
		grid = major,
		cycle list name=thechair4stocache,
		legend cell align=left,
		legend style ={legend pos=south east},
		xlabel={Target bitrate $T$ [nats/sec/Hz]},
		ylabel=Outage probability]
		
	\addplot+[semithick] plot coordinates {
		(0.01,	0.50739)
		(0.02,	0.51452)
		(0.03,	0.5214)
		(0.04,	0.52806)
		(0.05,	0.53449)
		(0.06,	0.54073)
		(0.07,	0.54678)
		(0.08,	0.55265)
		(0.09,	0.55836)
		(0.1,	0.5639)
		(0.11,	0.5693)
		(0.12,	0.57455)
		(0.13,	0.57966)
		(0.14,	0.58465)
		(0.15,	0.58951)
		(0.16,	0.59426)
		(0.17,	0.59889)
		(0.18,	0.60342)
		(0.19,	0.60784)
		(0.2,	0.61216)
		(0.21,	0.61639)
		(0.22,	0.62053)
		(0.23,	0.62457)
		(0.24,	0.62854)
		(0.25,	0.63242)
		(0.26,	0.63623)
		(0.27,	0.63995)
		(0.28,	0.64361)
		(0.29,	0.6472)
		(0.3,	0.65071)
		(0.31,	0.65416)
		(0.32,	0.65755)
		(0.33,	0.66088)
		(0.34,	0.66414)
		(0.35,	0.66735)
		(0.36,	0.6705)
		(0.37,	0.6736)
		(0.38,	0.67665)
		(0.39,	0.67964)
		(0.4,	0.68258)
		(0.41,	0.68548)
		(0.42,	0.68833)
		(0.43,	0.69113)
		(0.44,	0.69389)
		(0.45,	0.6966)
		(0.46,	0.69927)
		(0.47,	0.70191)
		(0.48,	0.7045)
		(0.49,	0.70705)
		(0.5,	0.70956)
		(0.51,	0.71204)
		(0.52,	0.71448)
		(0.53,	0.71689)
		(0.54,	0.71926)
		(0.55,	0.7216)
		(0.56,	0.7239)
		(0.57,	0.72618)
		(0.58,	0.72842)
		(0.59,	0.73063)
		(0.6,	0.73281)
		(0.61,	0.73496)
		(0.62,	0.73709)
		(0.63,	0.73918)
		(0.64,	0.74125)
		(0.65,	0.74329)
		(0.66,	0.74531)
		(0.67,	0.7473)
		(0.68,	0.74926)
		(0.69,	0.7512)
		(0.7,	0.75312)
		(0.71,	0.75501)
		(0.72,	0.75688)
		(0.73,	0.75873)
		(0.74,	0.76055)
		(0.75,	0.76236)
		(0.76,	0.76414)
		(0.77,	0.7659)
		(0.78,	0.76764)
		(0.79,	0.76936)
		(0.8,	0.77106)
		(0.81,	0.77274)
		(0.82,	0.7744)
		(0.83,	0.77604)
		(0.84,	0.77767)
		(0.85,	0.77927)
		(0.86,	0.78086)
		(0.87,	0.78243)
		(0.88,	0.78399)
		(0.89,	0.78552)
		(0.9,	0.78704)
		(0.91,	0.78855)
		(0.92,	0.79004)
		(0.93,	0.79151)
		(0.94,	0.79297)
		(0.95,	0.79441)
		(0.96,	0.79584)
		(0.97,	0.79725)
		(0.98,	0.79865)
		(0.99,	0.80003)
		(1,	0.8014)
	}; 	\addlegendentry{$S = 1$ (The.)}

	\addplot+[only marks, semithick] plot coordinates {
		(0.01,	0.48298)
		(0.11,	0.54458)
		(0.21,	0.58407)
		(0.31,	0.62355)
		(0.41,	0.65198)
		(0.51,	0.67462)
		(0.61,	0.68937)
		(0.71,	0.70147)
		(0.81,	0.71306)
		(0.91,	0.72833)
	}; 	\addlegendentry{$S = 1$ (Sim.)}

	\addplot+[semithick] plot coordinates {
		(0.01,	0.34319)
		(0.02,	0.3527)
		(0.03,	0.36187)
		(0.04,	0.37074)
		(0.05,	0.37933)
		(0.06,	0.38764)
		(0.07,	0.39571)
		(0.08,	0.40354)
		(0.09,	0.41114)
		(0.1,	0.41854)
		(0.11,	0.42573)
		(0.12,	0.43273)
		(0.13,	0.43955)
		(0.14,	0.4462)
		(0.15,	0.45268)
		(0.16,	0.45901)
		(0.17,	0.46519)
		(0.18,	0.47122)
		(0.19,	0.47712)
		(0.2,	0.48288)
		(0.21,	0.48852)
		(0.22,	0.49403)
		(0.23,	0.49943)
		(0.24,	0.50472)
		(0.25,	0.5099)
		(0.26,	0.51497)
		(0.27,	0.51994)
		(0.28,	0.52481)
		(0.29,	0.52959)
		(0.3,	0.53428)
		(0.31,	0.53889)
		(0.32,	0.5434)
		(0.33,	0.54784)
		(0.34,	0.55219)
		(0.35,	0.55647)
		(0.36,	0.56067)
		(0.37,	0.5648)
		(0.38,	0.56886)
		(0.39,	0.57285)
		(0.4,	0.57678)
		(0.41,	0.58064)
		(0.42,	0.58444)
		(0.43,	0.58817)
		(0.44,	0.59185)
		(0.45,	0.59547)
		(0.46,	0.59903)
		(0.47,	0.60254)
		(0.48,	0.606)
		(0.49,	0.6094)
		(0.5,	0.61275)
		(0.51,	0.61606)
		(0.52,	0.61931)
		(0.53,	0.62252)
		(0.54,	0.62568)
		(0.55,	0.6288)
		(0.56,	0.63187)
		(0.57,	0.6349)
		(0.58,	0.63789)
		(0.59,	0.64084)
		(0.6,	0.64375)
		(0.61,	0.64662)
		(0.62,	0.64945)
		(0.63,	0.65225)
		(0.64,	0.655)
		(0.65,	0.65773)
		(0.66,	0.66041)
		(0.67,	0.66307)
		(0.68,	0.66569)
		(0.69,	0.66827)
		(0.7,	0.67083)
		(0.71,	0.67335)
		(0.72,	0.67584)
		(0.73,	0.67831)
		(0.74,	0.68074)
		(0.75,	0.68314)
		(0.76,	0.68552)
		(0.77,	0.68786)
		(0.78,	0.69018)
		(0.79,	0.69248)
		(0.8,	0.69474)
		(0.81,	0.69698)
		(0.82,	0.6992)
		(0.83,	0.70139)
		(0.84,	0.70356)
		(0.85,	0.7057)
		(0.86,	0.70782)
		(0.87,	0.70991)
		(0.88,	0.71198)
		(0.89,	0.71403)
		(0.9,	0.71606)
		(0.91,	0.71807)
		(0.92,	0.72005)
		(0.93,	0.72201)
		(0.94,	0.72396)
		(0.95,	0.72588)
		(0.96,	0.72778)
		(0.97,	0.72967)
		(0.98,	0.73153)
		(0.99,	0.73338)
		(1,	0.7352)
	}; 	\addlegendentry{$S = 2$ (The.)}

	\addplot+[only marks, semithick] plot coordinates {
		(0.01,	0.3291)
		(0.11,	0.40903)
		(0.21,	0.46027)
		(0.31,	0.51151)
		(0.41,	0.5484)
		(0.51,	0.57778)
		(0.61,	0.59691)
		(0.71,	0.61263)
		(0.81,	0.62766)
		(0.91,	0.64747)
	}; 	\addlegendentry{$S = 2$ (Sim.)}

	\addplot+[semithick] plot coordinates {
		(0.01,	0.10435)
		(0.02,	0.11731)
		(0.03,	0.12982)
		(0.04,	0.14192)
		(0.05,	0.15363)
		(0.06,	0.16497)
		(0.07,	0.17597)
		(0.08,	0.18664)
		(0.09,	0.19701)
		(0.1,	0.20709)
		(0.11,	0.2169)
		(0.12,	0.22645)
		(0.13,	0.23575)
		(0.14,	0.24482)
		(0.15,	0.25366)
		(0.16,	0.26229)
		(0.17,	0.27071)
		(0.18,	0.27894)
		(0.19,	0.28698)
		(0.2,	0.29484)
		(0.21,	0.30253)
		(0.22,	0.31005)
		(0.23,	0.31741)
		(0.24,	0.32462)
		(0.25,	0.33168)
		(0.26,	0.33859)
		(0.27,	0.34537)
		(0.28,	0.35202)
		(0.29,	0.35854)
		(0.3,	0.36493)
		(0.31,	0.37121)
		(0.32,	0.37737)
		(0.33,	0.38341)
		(0.34,	0.38935)
		(0.35,	0.39519)
		(0.36,	0.40092)
		(0.37,	0.40655)
		(0.38,	0.41208)
		(0.39,	0.41753)
		(0.4,	0.42288)
		(0.41,	0.42814)
		(0.42,	0.43332)
		(0.43,	0.43842)
		(0.44,	0.44343)
		(0.45,	0.44837)
		(0.46,	0.45323)
		(0.47,	0.45801)
		(0.48,	0.46272)
		(0.49,	0.46736)
		(0.5,	0.47194)
		(0.51,	0.47644)
		(0.52,	0.48088)
		(0.53,	0.48525)
		(0.54,	0.48956)
		(0.55,	0.49382)
		(0.56,	0.49801)
		(0.57,	0.50214)
		(0.58,	0.50622)
		(0.59,	0.51024)
		(0.6,	0.5142)
		(0.61,	0.51812)
		(0.62,	0.52198)
		(0.63,	0.52579)
		(0.64,	0.52955)
		(0.65,	0.53326)
		(0.66,	0.53693)
		(0.67,	0.54054)
		(0.68,	0.54412)
		(0.69,	0.54764)
		(0.7,	0.55113)
		(0.71,	0.55457)
		(0.72,	0.55797)
		(0.73,	0.56133)
		(0.74,	0.56464)
		(0.75,	0.56792)
		(0.76,	0.57116)
		(0.77,	0.57436)
		(0.78,	0.57752)
		(0.79,	0.58065)
		(0.8,	0.58374)
		(0.81,	0.5868)
		(0.82,	0.58982)
		(0.83,	0.5928)
		(0.84,	0.59576)
		(0.85,	0.59868)
		(0.86,	0.60157)
		(0.87,	0.60442)
		(0.88,	0.60725)
		(0.89,	0.61004)
		(0.9,	0.61281)
		(0.91,	0.61554)
		(0.92,	0.61825)
		(0.93,	0.62093)
		(0.94,	0.62358)
		(0.95,	0.6262)
		(0.96,	0.6288)
		(0.97,	0.63137)
		(0.98,	0.63391)
		(0.99,	0.63642)
		(1,	0.63892)
	}; 	\addlegendentry{$S = 10$ (The.)}

	\addplot+[only marks, semithick] plot coordinates {
		(0.01,	0.10501)
		(0.11,	0.21164)
		(0.21,	0.27999)
		(0.31,	0.34835)
		(0.41,	0.39756)
		(0.51,	0.43675)
		(0.61,	0.46227)
		(0.71,	0.48324)
		(0.81,	0.50329)
		(0.91,	0.52972)
	}; 	\addlegendentry{$S = 10$ (Sim.)}

	\end{axis}
\end{tikzpicture}
\caption{The evolution of outage probability with respect to the target file bitrate. $\mathrm{SNR} = 10$ dB, $\lambda = 0.2$, $\gamma = 2$, $L = 1$ nats, $\alpha = 4$, $C_1 = 0.0005$, $C_2 = 0$.}
\label{fig:plots-target}
\vspace{0.35cm}
\end{figure}
\subsection{Impact of file popularity shape}
Another crucial parameter in our setup is the shape of the file popularity distribution, parameterized by $\gamma$. The impact of the parameter $\gamma$ on the outage probability, for different storage sizes, is given in Figure  \ref{fig:plots-popularity}. Generally, a higher value of $\gamma$ means that only a small portion of files is highly popular compared to the rest of the files. On the contrary, lower values of $\gamma$ correspond to a more uniform behavior on the popularity distribution. Therefore, as $\gamma$ increases, the outage probability reduces due to reduced requirement in terms of storage size. However, in very low and high values of $\gamma$, the impact on the outage probability is not high compared to the intermediate values.
\input{plots-popularity}
\section{David vs. Goliath: More \glspl{SBS} with less storage or less \glspl{SBS} with more storage?}
\label{sec:davidvsgoliath}
In the previous section, we have validated our results via numerical simulations and discussed the impact of several parameters on the outage probability and average delivery rate. On top of those, we are interested in finding a tradeoff between the \ac{SBS} density and the total storage size for a fixed set of parameters. We start by making an analogy with well-known David and Goliath story to examine the tradeoff between the \ac{SBS} density and total storage size.\footnote{David vs. Goliath refers to the underlying resource sharing problem which arises in a variety of scenarios including massive MIMO vs. Small Cells \cite{Hoydis2011David}.}
More precisely, we aim to answer the following question: Should we increase storage size of current \glspl{SBS} ({\bf David}) or deploy more \glspl{SBS} with less storage ({\bf Goliath}) in order to achieve a certain success probability? The answer is indeed useful for the realization of such a scenario. Putting more \glspl{SBS} in a given area may be not desirable due to increased deployment and operation costs ({\bf Evil}). Therefore, increasing the storage size of already deployed \glspl{SBS} may incur less cost ({\bf Good}). To characterize this tradeoff, we first define the optimal region as follows:
\begin{definition}[Optimal region] \label{def:optimalregion}
An outage probability $p^{\dagger}$ is said to be achievable if there exist some parameters $\lambda, T, \alpha, S, L, \gamma$ satisfying the following condition:
\begin{align}
	p_{\mathrm{out}}(\lambda,T,\alpha,S, L, \gamma) \leq p^{\dagger}.\notag
\end{align}
The set of all achievable $p^{\dagger}$ forms the optimal region.
\end{definition}
The optimal region can be tightened by restricting parameters $\lambda, T, \alpha, S, L, \gamma$ to some intervals. A detailed analysis on this is left for future work. Hereafter, we restrict ourselves to find the optimal \ac{SBS} density for a fixed set of parameters. In such a case, optimal \ac{SBS} density can be readily obtained by plugging these fixed parameters into $p_{\mathrm{out}}$ and solving the equation either analytically or numerically (i.e. bisection method \cite{Press2007Numerical}). In the following, we obtain a tradeoff curve between the \glspl{SBS} density and total storage size, by solving these equations systematically in the form of optimization problem.
\begin{definition}[\ac{SBS} density vs. total storage size tradeoff] Define the average total storage as $S_{\mathrm{total}} = {\lambda}S$, and fix $T$, $\alpha$, $L$ and $\gamma$ to some values in the optimal region given in Definition \ref{def:optimalregion}. Denote also $\lambda^{\star}$ as the optimal \ac{SBS} density for a given $S_{\mathrm{total}}$. Then, $\lambda^{\star}$  is  obtained by solving the following optimization problem:
\begin{align}
&	\underset{\lambda}{\mathrm{minimize}}	& & \lambda 		&	 \label{eq:tradeObjective} \\
&	\mathrm{subject}\text{ }\mathrm{to}	& & p_{\mathrm{out}}(\lambda,T,\alpha,S_{\mathrm{total}}/\lambda,L,\gamma) \leq p^{\dagger}. & 	\subeqn \label{eq:tradeC1Out}
\end{align}
The set of all achievable pairs $(\lambda^{\star}, S_{\mathrm{total}})$ characterize a tradeoff between the \ac{SBS} density and total storage size. 
\end{definition}
Figures \ref{fig:optimalDensity1} and \ref{fig:optimalDensity2} show two different configurations of the tradeoff. In these plots, to  achieve a certain outage probability (i.e. $p^{\dagger} = 0.3$), we see that it is sufficient to decrease the number of \glspl{SBS} by increasing the total storage size. Alternatively, the total storage size can be decreased by increasing the number of \glspl{SBS}. Moreover, for different values of parameter of interest (i.e. $T \in \{0.1, 0.2\}$ or $L \in \{1, 2\}$), there is also a scaling and shifting in this tradeoff. Regardless of this scaling and shifting, we see that David wins victory against Goliath.
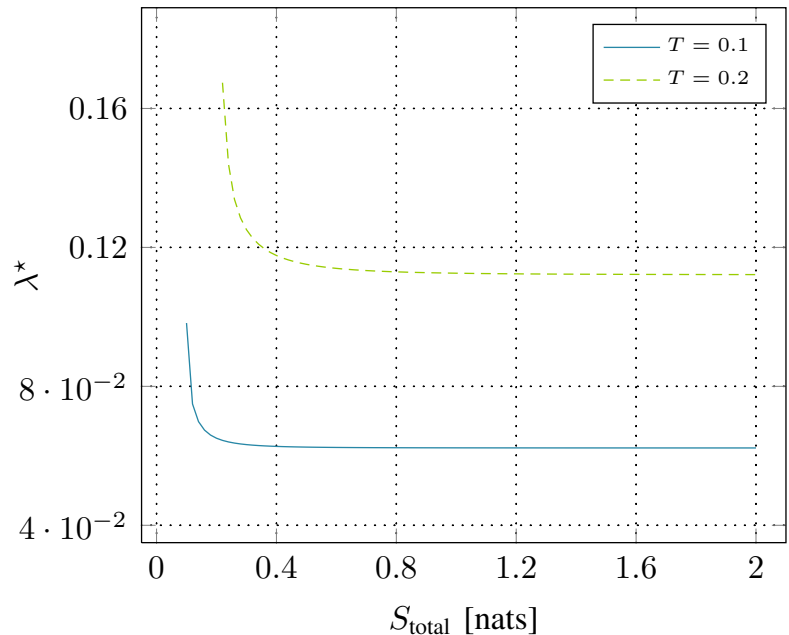
\begin{figure}[!h]
\centering
\begin{tikzpicture}[scale=1.25, baseline]
	\begin{axis}[
		grid = major,
		cycle list name=thechair4stocache2,
		mark repeat={10},
		ymin= 0.035,ymax=0.1890, ytick={0, 0.04, 0.08, 0.12, 0.16, 0.20},
		xmin=-0.05,xmax=2.10,   xtick={0, 0.40, 0.80, 1.20, 1.60, 2.00},
		legend cell align=left,
		legend style ={legend pos=north east, font=\tiny},
		xlabel= {$S_{\text{total}}$ [nats]},
		ylabel= $\lambda^{\star}$]
		
	\addplot plot coordinates {
		(0.1,	0.09823)
		(0.12,	0.074926)
		(0.14,	0.069846)
		(0.16,	0.067402)
		(0.18,	0.065978)
		(0.2,	0.065064)
		(0.22,	0.06444)
		(0.24,	0.063995)
		(0.26,	0.063667)
		(0.28,	0.063418)
		(0.3,	0.063226)
		(0.32,	0.063075)
		(0.34,	0.062955)
		(0.36,	0.062857)
		(0.38,	0.062777)
		(0.4,	0.062711)
		(0.42,	0.062655)
		(0.44,	0.062609)
		(0.46,	0.062569)
		(0.48,	0.062536)
		(0.5,	0.062507)
		(0.52,	0.062482)
		(0.54,	0.06246)
		(0.56,	0.062441)
		(0.58,	0.062424)
		(0.6,	0.062409)
		(0.62,	0.062396)
		(0.64,	0.062385)
		(0.66,	0.062374)
		(0.68,	0.062365)
		(0.7,	0.062357)
		(0.72,	0.062349)
		(0.74,	0.062343)
		(0.76,	0.062337)
		(0.78,	0.062331)
		(0.8,	0.062326)
		(0.82,	0.062322)
		(0.84,	0.062317)
		(0.86,	0.062314)
		(0.88,	0.06231)
		(0.9,	0.062307)
		(0.92,	0.062304)
		(0.94,	0.062301)
		(0.96,	0.062299)
		(0.98,	0.062296)
		(1,	0.062294)
		(1.02,	0.062292)
		(1.04,	0.062291)
		(1.06,	0.062289)
		(1.08,	0.062287)
		(1.1,	0.062286)
		(1.12,	0.062284)
		(1.14,	0.062283)
		(1.16,	0.062282)
		(1.18,	0.062281)
		(1.2,	0.06228)
		(1.22,	0.062279)
		(1.24,	0.062278)
		(1.26,	0.062277)
		(1.28,	0.062276)
		(1.3,	0.062275)
		(1.32,	0.062274)
		(1.34,	0.062274)
		(1.36,	0.062273)
		(1.38,	0.062272)
		(1.4,	0.062272)
		(1.42,	0.062271)
		(1.44,	0.062271)
		(1.46,	0.06227)
		(1.48,	0.06227)
		(1.5,	0.062269)
		(1.52,	0.062269)
		(1.54,	0.062269)
		(1.56,	0.062268)
		(1.58,	0.062268)
		(1.6,	0.062267)
		(1.62,	0.062267)
		(1.64,	0.062267)
		(1.66,	0.062266)
		(1.68,	0.062266)
		(1.7,	0.062266)
		(1.72,	0.062266)
		(1.74,	0.062265)
		(1.76,	0.062265)
		(1.78,	0.062265)
		(1.8,	0.062265)
		(1.82,	0.062264)
		(1.84,	0.062264)
		(1.86,	0.062264)
		(1.88,	0.062264)
		(1.9,	0.062264)
		(1.92,	0.062264)
		(1.94,	0.062263)
		(1.96,	0.062263)
		(1.98,	0.062263)
		(2,	0.062263)
	}; 	\addlegendentry{$T = 0.1$}

	\addplot plot coordinates {
		(0.22,	0.16741)
		(0.24,	0.144)
		(0.26,	0.13382)
		(0.28,	0.12852)
		(0.3,	0.12511)
		(0.32,	0.12273)
		(0.34,	0.12096)
		(0.36,	0.1196)
		(0.38,	0.11853)
		(0.4,	0.11767)
		(0.42,	0.11697)
		(0.44,	0.11639)
		(0.46,	0.1159)
		(0.48,	0.11549)
		(0.5,	0.11513)
		(0.52,	0.11483)
		(0.54,	0.11457)
		(0.56,	0.11434)
		(0.58,	0.11414)
		(0.6,	0.11396)
		(0.62,	0.1138)
		(0.64,	0.11367)
		(0.66,	0.11354)
		(0.68,	0.11343)
		(0.7,	0.11333)
		(0.72,	0.11324)
		(0.74,	0.11316)
		(0.76,	0.11309)
		(0.78,	0.11302)
		(0.8,	0.11296)
		(0.82,	0.11291)
		(0.84,	0.11285)
		(0.86,	0.11281)
		(0.88,	0.11277)
		(0.9,	0.11273)
		(0.92,	0.11269)
		(0.94,	0.11266)
		(0.96,	0.11263)
		(0.98,	0.1126)
		(1,	0.11257)
		(1.02,	0.11255)
		(1.04,	0.11252)
		(1.06,	0.1125)
		(1.08,	0.11248)
		(1.1,	0.11246)
		(1.12,	0.11245)
		(1.14,	0.11243)
		(1.16,	0.11242)
		(1.18,	0.1124)
		(1.2,	0.11239)
		(1.22,	0.11237)
		(1.24,	0.11236)
		(1.26,	0.11235)
		(1.28,	0.11234)
		(1.3,	0.11233)
		(1.32,	0.11232)
		(1.34,	0.11231)
		(1.36,	0.1123)
		(1.38,	0.1123)
		(1.4,	0.11229)
		(1.42,	0.11228)
		(1.44,	0.11228)
		(1.46,	0.11227)
		(1.48,	0.11226)
		(1.5,	0.11226)
		(1.52,	0.11225)
		(1.54,	0.11225)
		(1.56,	0.11224)
		(1.58,	0.11224)
		(1.6,	0.11223)
		(1.62,	0.11223)
		(1.64,	0.11222)
		(1.66,	0.11222)
		(1.68,	0.11222)
		(1.7,	0.11221)
		(1.72,	0.11221)
		(1.74,	0.11221)
		(1.76,	0.1122)
		(1.78,	0.1122)
		(1.8,	0.1122)
		(1.82,	0.11219)
		(1.84,	0.11219)
		(1.86,	0.11219)
		(1.88,	0.11219)
		(1.9,	0.11218)
		(1.92,	0.11218)
		(1.94,	0.11218)
		(1.96,	0.11218)
		(1.98,	0.11217)
		(2,	0.11217)
	}; 	\addlegendentry{$T = 0.2$}

	\end{axis}
\end{tikzpicture}
\caption{The trade-off between SBSs density and total storage size for different file target bitrates. $\mathrm{SNR} = 10$ dB, $\alpha = 4$, $L = 1$ nats, $\gamma = 3$ and $p^{\dagger} = 0.3$.}
\label{fig:optimalDensity1}
\vspace{0.35cm}
\end{figure}
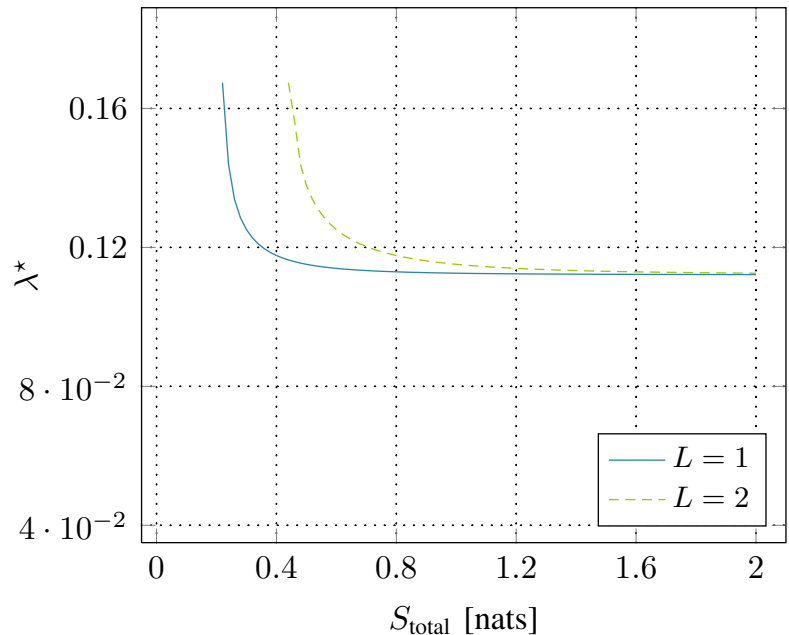
\begin{figure}[!h]
\centering
\begin{tikzpicture}[scale=1.25, baseline]
	\begin{axis}[
		grid = major,
		cycle list name=thechair4stocache2,
		mark repeat={10},
		ymin= 0.035,ymax=0.1890, ytick={0, 0.04, 0.08, 0.12, 0.16, 0.20},
		xmin=-0.05,xmax=2.10,   xtick={0, 0.40, 0.80, 1.20, 1.60, 2.00},		
		legend cell align=left,
		legend style ={legend pos=south east},
		xlabel= {$S_{\text{total}}$ [nats]},
		ylabel= $\lambda^{\star}$]
		
	\addplot plot coordinates {
		(0.22,	0.16741)
		(0.24,	0.144)
		(0.26,	0.13382)
		(0.28,	0.12852)
		(0.3,	0.12511)
		(0.32,	0.12273)
		(0.34,	0.12096)
		(0.36,	0.1196)
		(0.38,	0.11853)
		(0.4,	0.11767)
		(0.42,	0.11697)
		(0.44,	0.11639)
		(0.46,	0.1159)
		(0.48,	0.11549)
		(0.5,	0.11513)
		(0.52,	0.11483)
		(0.54,	0.11457)
		(0.56,	0.11434)
		(0.58,	0.11414)
		(0.6,	0.11396)
		(0.62,	0.1138)
		(0.64,	0.11367)
		(0.66,	0.11354)
		(0.68,	0.11343)
		(0.7,	0.11333)
		(0.72,	0.11324)
		(0.74,	0.11316)
		(0.76,	0.11309)
		(0.78,	0.11302)
		(0.8,	0.11296)
		(0.82,	0.11291)
		(0.84,	0.11285)
		(0.86,	0.11281)
		(0.88,	0.11277)
		(0.9,	0.11273)
		(0.92,	0.11269)
		(0.94,	0.11266)
		(0.96,	0.11263)
		(0.98,	0.1126)
		(1,	0.11257)
		(1.02,	0.11255)
		(1.04,	0.11252)
		(1.06,	0.1125)
		(1.08,	0.11248)
		(1.1,	0.11246)
		(1.12,	0.11245)
		(1.14,	0.11243)
		(1.16,	0.11242)
		(1.18,	0.1124)
		(1.2,	0.11239)
		(1.22,	0.11237)
		(1.24,	0.11236)
		(1.26,	0.11235)
		(1.28,	0.11234)
		(1.3,	0.11233)
		(1.32,	0.11232)
		(1.34,	0.11231)
		(1.36,	0.1123)
		(1.38,	0.1123)
		(1.4,	0.11229)
		(1.42,	0.11228)
		(1.44,	0.11228)
		(1.46,	0.11227)
		(1.48,	0.11226)
		(1.5,	0.11226)
		(1.52,	0.11225)
		(1.54,	0.11225)
		(1.56,	0.11224)
		(1.58,	0.11224)
		(1.6,	0.11223)
		(1.62,	0.11223)
		(1.64,	0.11222)
		(1.66,	0.11222)
		(1.68,	0.11222)
		(1.7,	0.11221)
		(1.72,	0.11221)
		(1.74,	0.11221)
		(1.76,	0.1122)
		(1.78,	0.1122)
		(1.8,	0.1122)
		(1.82,	0.11219)
		(1.84,	0.11219)
		(1.86,	0.11219)
		(1.88,	0.11219)
		(1.9,	0.11218)
		(1.92,	0.11218)
		(1.94,	0.11218)
		(1.96,	0.11218)
		(1.98,	0.11217)
		(2,	0.11217)
	}; 	\addlegendentry{$L = 1$}

	\addplot plot coordinates {
		(0.44,	0.16741)
		(0.46,	0.15662)
		(0.48,	0.144)
		(0.5,	0.13788)
		(0.52,	0.13382)
		(0.54,	0.13083)
		(0.56,	0.12852)
		(0.58,	0.12665)
		(0.6,	0.12511)
		(0.62,	0.12382)
		(0.64,	0.12273)
		(0.66,	0.12178)
		(0.68,	0.12096)
		(0.7,	0.12024)
		(0.72,	0.1196)
		(0.74,	0.11904)
		(0.76,	0.11853)
		(0.78,	0.11808)
		(0.8,	0.11767)
		(0.82,	0.1173)
		(0.84,	0.11697)
		(0.86,	0.11667)
		(0.88,	0.11639)
		(0.9,	0.11613)
		(0.92,	0.1159)
		(0.94,	0.11568)
		(0.96,	0.11549)
		(0.98,	0.1153)
		(1,	0.11513)
		(1.02,	0.11498)
		(1.04,	0.11483)
		(1.06,	0.11469)
		(1.08,	0.11457)
		(1.1,	0.11445)
		(1.12,	0.11434)
		(1.14,	0.11423)
		(1.16,	0.11414)
		(1.18,	0.11405)
		(1.2,	0.11396)
		(1.22,	0.11388)
		(1.24,	0.1138)
		(1.26,	0.11373)
		(1.28,	0.11367)
		(1.3,	0.1136)
		(1.32,	0.11354)
		(1.34,	0.11348)
		(1.36,	0.11343)
		(1.38,	0.11338)
		(1.4,	0.11333)
		(1.42,	0.11329)
		(1.44,	0.11324)
		(1.46,	0.1132)
		(1.48,	0.11316)
		(1.5,	0.11312)
		(1.52,	0.11309)
		(1.54,	0.11305)
		(1.56,	0.11302)
		(1.58,	0.11299)
		(1.6,	0.11296)
		(1.62,	0.11293)
		(1.64,	0.11291)
		(1.66,	0.11288)
		(1.68,	0.11285)
		(1.7,	0.11283)
		(1.72,	0.11281)
		(1.74,	0.11279)
		(1.76,	0.11277)
		(1.78,	0.11275)
		(1.8,	0.11273)
		(1.82,	0.11271)
		(1.84,	0.11269)
		(1.86,	0.11267)
		(1.88,	0.11266)
		(1.9,	0.11264)
		(1.92,	0.11263)
		(1.94,	0.11261)
		(1.96,	0.1126)
		(1.98,	0.11258)
		(2,	0.11257)
	}; 	\addlegendentry{$L = 2$}

	\end{axis}
\end{tikzpicture}
\caption{The trade-off between SBSs density and total storage size for different file lengths. $\mathrm{SNR} = 10$ dB, $\alpha = 4$, $T = 0.2$ nats/sec/Hz, $\gamma = 3$ and $p^{\dagger} = 0.3$.}
\label{fig:optimalDensity2}
\vspace{0.35cm}
\end{figure}
\section{Conclusions}
\label{sec:conclusions}
We have studied the caching problem in a scenario where \glspl{SBS} are stochastically distributed and have finite-rate backhaul links. We derived expressions for the outage probability and average delivery rate, and validate these results via numerical simulations. The results showed that significant gains in terms of outage probability and average delivery rate are possible by having cache-enabled \glspl{SBS}. We showed that telecom operators can either deploy more base stations or increase the storage size of existing deployment in order to achieve a certain \ac{QoE} level.
\bibliographystyle{IEEEtran}
\bibliography{references}
%
\appendices
\section{Proof of Theorem \ref{the:outage-general}}
\label{app:outage-general}
In order to prove Theorem \ref{the:outage-general}, we modify some useful results from \cite{Andrews2011Tractable}. Conditioning on the nearest base station at a distance $r$ from the typical user, the outage probability can be written as:
\begin{multline}
	p_{\text{out}}(\lambda,T,\alpha,S,\gamma)  
	= 
	\mathbb{E}_r\Big[
	 1 - \mathbb{P}[\mathrm{ln}(1 + \mathrm{SINR}) > T, f_o \in \Delta_{b_o} \mid r]
	 \Big]. \notag
\end{multline}
Since expectation is a linear operator and these two events are independent, the above expression can be decomposed as:
\begin{multline}
	\label{pro:outage-general:expanded}
	p_{\text{out}}(\lambda,T,\alpha,S,\gamma)  
	= 
	1 - 
	\underbrace{\mathbb{E}_r\Big[
	\mathbb{P}\left[\mathrm{ln}(1 + \mathrm{SINR}) > T \mid r \right]
	 \Big]}_\text{$(i)$}
	\underbrace{\mathbb{E}_r\Big[
	 \mathbb{P}\left[f_o \in \Delta_{b_o} \mid r\right]
	 \Big]}_\text{$(ii)$}.
\end{multline}
Proceeding term by term, we first write $(i)$ as:
\begin{flalign}
	& \mathbb{E}_r\left[\mathbb{P}\left[\mathrm{ln}(1 + \mathrm{SINR}) > T \mid r \right]\right] \notag &  
\end{flalign}
\vspace{-1.1cm}
\begin{align}
	&=
	\int_{r>0}{
		\mathbb{P}\left[\mathrm{ln}(1 + \mathrm{SINR}) > T \mid r \right]	f_r(r)\mathrm{d}r 
	} \label{pro:outage-general:firstterm-start} \\
	&\stackrel{(a)}{=} 
	\int_{r>0}{
		\mathbb{P}\left[\mathrm{ln}(1 + \mathrm{SINR}) > T \mid r \right]	e^{-\pi\lambda r^2}2\pi\lambda r\mathrm{d}r 
	} \notag \\
	&\stackrel{(b)}{=} 
	\int_{r>0}{
		\mathbb{P}\left[\frac{hr^{-\alpha}}{\sigma^2+I_r} > e^T-1 \mid r \right]	e^{-\pi\lambda r^2}2\pi\lambda r\mathrm{d}r
	} \notag \\
	&\stackrel{(c)}{=} 
	\int_{r>0}{
		\mathbb{P}\left[h > r^{\alpha}(e^T-1)(\sigma^2 + I_r) \mid r \right]	e^{-\pi\lambda r^2}2\pi\lambda r \mathrm{d}r
	}, \label{pro:outage-general:firstterm}
\end{align}
where $f_r(r) = e^{-\pi\lambda r^2}2\pi\lambda r$ is the \ac{PDF} of $r$ for \ac{PPP} \cite{Andrews2011Tractable}, hence $(a)$ follows from its substitution. The expression in $(b)$ is obtained by plugging the \ac{SINR} formula and letting it on the left hand side of the inequality, $(c)$ is the result of some algebraic manipulations for keeping fading variable $h$ alone. 

Conditioning on $I_r$ and using the fact that $h\sim \mathrm{Exponential(\mu)}$, the probability of random variable $h$ exceeding $r^{\alpha}(e^T-1)(\sigma^2 + I_r)$ can be written as:
\begin{flalign}
& \mathbb{P}\left[h > r^{\alpha}(e^T-1)(\sigma^2 + I_r) \mid r \right] \notag  &
\end{flalign}
\vspace{-1.1cm}
\begin{align}
	&= 
	\mathbb{E}_{I_r}\left[
			\mathbb{P}\left[h > r^{\alpha}(e^T-1)(\sigma^2 + I_r) \mid r, I_r \right]	
	\right] \notag \\
	&=
	\mathbb{E}_{I_r}\left[
			\mathrm{exp}\left(-\mu r^{\alpha}(e^T-1)(\sigma^2 + I_r)\right)
	\mid r \right] \notag \\
	&=
	e^{-\mu r^{\alpha}(e^T-1)\sigma^2}
	\mathcal{L}_{I_r}\left(\mu r^{\alpha}(e^T-1) \right), \label{pro:outage-general:h}
\end{align}
where $\mathcal{L}(s)$ is the Laplace transform of random variable $I_r$ evaluated at $s$ conditioned on the distance of the nearest base station from the origin. Substituting (\ref{pro:outage-general:h}) into  (\ref{pro:outage-general:firstterm}) yields the following:
\begin{multline}
	\label{pro:outage-general:firstterm2}
	\mathbb{E}_r\left[\mathbb{P}\left[\mathrm{ln}(1 + \mathrm{SINR}) > T \mid r \right]\right]
	= \\
	\int_{r>0}{
			e^{-\mu r^{\alpha}(e^T-1)\sigma^2}
			\mathcal{L}_{I_r}\left(\mu r^{\alpha}(e^T-1) \right)
				e^{-\pi\lambda r^2}2\pi\lambda r \mathrm{d}r
	}.
\end{multline}
Defining $g_i$ as a random variable of arbitrary but identical distribution for all $i$, and $R_i$ as the distance from the $i$-th base station to the tagged receiver, the Laplace transform is written as:
\begin{align}
	\mathcal{L}_{I_r}(s) &= \mathbb{E}_{I_r}\left[e^{-sI_r}\right] = 
		\mathbb{E}_{\Phi,\{g_i\}}
		\left[ 
			\mathrm{exp}\left( -s\sum_{i \in \Phi \backslash\{b_o\}}{g_iR^{-\alpha}_i}	\right)
		\right] \notag \\
	&=
		\mathbb{E}_{\Phi,\{g_i\}}
		\left[ 
			\prod_{i \in \Phi \backslash\{b_o\}}{ \mathrm{exp}\left(-sg_iR^{-\alpha}_i\right)}
		\right] \notag \\
	&\stackrel{(a)}{=} 
		\mathbb{E}_{\Phi}
		\left[ 
			\prod_{i \in \Phi \backslash\{b_o\}}{ \mathbb{E}_{\{g_i\}}\left[\mathrm{exp}\left(-sg_iR^{-\alpha}_i\right)\right]}
		\right] \notag \\
	&\stackrel{(b)}{=} 
		\mathbb{E}_{\Phi}
		\left[ 
			\prod_{i \in \Phi \backslash\{b_o\}}{ \mathbb{E}_g\left[\mathrm{exp}\left(-sgR^{-\alpha}_i\right)\right]}
		\right] \notag \\
	&=
	\mathrm{exp}\left(
		-2\pi\lambda
		\int^{\infty}_{r}{\left(1 -  \mathbb{E}_{g}
			\left[
				\mathrm{exp}\left(-sgv^{-\alpha}\right)
			\right]
		\right)v\mathrm{d}v}
	\right), \notag
\end{align}
where $(a)$ comes from the independence of $g_i$ from the point process $\Phi$, and $(b)$ follows from the i.i.d. assumption of $g_i$. The last step comes from the \ac{PGFL} of the \ac{PPP}, which basically says that for some function $f(x)$, $\mathbb{E}\left[\prod_{x \in \Phi}{f(x)}\right]=\mathrm{exp}\left(-\lambda\int_{\mathbb{R}^2}{(1 - f(x))\mathrm{d}x)} \right)$. Since the nearest interfering base station is at least at a distance $r$, the integration limits are from $r$ to infinity. Denoting $f(g)$ as the \ac{PDF} of $g$, then plugging in $s = \mu r^{\alpha}(e^T-1)$ and switching the integration order yields
\begin{multline}
	\mathcal{L}_{I_r}\left(\mu r^{\alpha}(e^T-1) \right) 
	= \\
		\mathrm{exp}\left(
		-2\pi\lambda
		\int^{\infty}_{0}{
		\left(
		\int^{\infty}_{r}{\left(1  -  e^{-\mu r^{\alpha}(e^T - 1)v^{-\alpha}g}
		\right)v\mathrm{d}v}
		\right)
		f(g)
		\mathrm{d}g
		}		
	\right). \nonumber
\end{multline}
By change of variables $v^{-\alpha} \rightarrow y$, the Laplace transform can be rewritten as:
\begin{multline}
	\label{pro:outage-general:laplace}
	\small
	\mathcal{L}_{I_r}\left(\mu r^{\alpha}(e^T-1) \right) 
	= \\
	\mathrm{exp}\Big(
		\lambda\pi r^2 - \frac{2\pi \lambda \left(\mu (e^T - 1)\right)^{\frac{2}{\alpha}}r^2}{\alpha}  \times \\
		\int^{\infty}_{0}{
			g^{\frac{2}{\alpha}}
			\left[
				\Gamma\left(-\frac{2}{\alpha},\mu\left(e^T - 1\right)g\right) -
				\Gamma\left(-\frac{2}{\alpha}\right)
			\right]	
			f(g)\mathrm{d}g	
		}
	\Big).
\end{multline}
Plugging (\ref{pro:outage-general:laplace}) into  (\ref{pro:outage-general:firstterm2}), using the substitution $r^2 \rightarrow v$ and after some algebraic manipulations, the expression becomes
\begin{multline}	
	\label{pro:outage-general:i}
	\mathbb{E}_r\left[\mathbb{P}\left[\mathrm{ln}(1 + \mathrm{SINR}) > T \mid r \right]\right] 
	= 
	\pi\lambda
	\int^{\infty}_{0}{
			e^{-\pi\lambda v\beta(T,\alpha) - \mu(e^T - 1)\sigma^2v^{\alpha/2}}\mathrm{d}v
	},
\end{multline}
where $\beta(T,\alpha)$ is given as
\begin{multline}
	\beta(T,\alpha) = \frac{2\left(\mu(e^T - 1)\right)}{\alpha}
	\mathbb{E}_g\left[
		g^{\frac{2}{\alpha}}
		\left(
			\Gamma\left(-\frac{2}{\alpha},\mu\left(e^T - 1\right)g\right) - 
			\Gamma\left(-\frac{2}{\alpha}\right)
		\right)							
	\right]. \notag
\end{multline}
So far, we have obtained $(i)$ of (\ref{pro:outage-general:expanded}). The term $(ii)$ is straightforward to derive. In the system model, as we assume that every small base station caches the same popular files and they have the same storage size, the cache hit probability becomes independent of the distance $r$. This yields: 
\begin{align}
	\label{pro:outage-general:ii}
	\mathbb{E}_r\left[\mathbb{P}\left[f_o \in \Delta_{b_o} \mid r\right]\right] 
	&= 
	\int^{S/L}_{0}{
			f_{\mathrm{pop}}(f,\gamma)\mathrm{d}f
	}.
\end{align}
Plugging both (\ref{pro:outage-general:i}) and (\ref{pro:outage-general:ii}) into (\ref{pro:outage-general:expanded}) and rearranging the terms, we conclude the proof. \hfill$\blacksquare$
\section{Proof of Theorem \ref{the:delivery-general}}
\label{app:delivery-general}
Average achievable delivery rate is ${\bar \tau} = \mathbb{E}\left[ \tau \right]$, where the average is taken over the \ac{PPP} and the fading distribution. It can be shown that
\begin{align}
	{\bar \tau} 
	&= \mathbb{E}\left[ \tau \right] \notag \\	
	&\stackrel{(a)}{=} 
		\mathbb{E}
		\Big[
			{\mathbb{P}\left[\mathrm{ln}(1 + \mathrm{SINR}) > T\right]}
			\Big(
				T\mathbb{P}\left[f_o \in \Delta_{b_o}\right]
				 + 
				C\left(\lambda \right)\mathbb{P}\left[f_o \not\in \Delta_{b_o}\right]
			\Big)
		\Big] \notag
\end{align}
\vspace{-1.0cm}
\begin{multline}			
	\hspace{0.20cm} \stackrel{(b)}{=} 
	 		\mathbb{E}\left[ \underbrace{\mathbb{P}\left[\mathrm{ln}(1 + \mathrm{SINR}) > T \mid r \right]}_\text{$\tau_1$} \right]	\times \\
	 		\left(
			\mathbb{E}\left[
				\underbrace{T\mathbb{P}\left[f_o \in \Delta_{b_o} \mid r \right]}_\text{$\tau_2$}
			\right]
			+ 
			\mathbb{E}\left[
				\underbrace{C\left(\lambda  \right)\mathbb{P}\left[f_o \not\in \Delta_{b_o} \mid r \right]}_\text{$\tau_3$}
			\right]
			\right) \notag
\end{multline}
\vspace{-1.0cm}
\begin{flalign}			
	\hspace{1.10cm} &=
		\mathbb{E}\left[\tau_1\right]
		\left(
		\mathbb{E}\left[\tau_2\right] +
		\mathbb{E}\left[\tau_3\right]
		\right), \label{pro:delivery-general:first} &
\end{flalign}
where $(a)$ is obtained by plugging the delivery rate as defined in (\ref{eq:deliveryrate}), and $(b)$ follows from independence of the events and linearity of the expectation operator.

Derivation of $\mathbb{E}[\tau_1]$ can be obtained from the proof of Theorem \ref{the:outage-general}, by following the steps from (\ref{pro:outage-general:firstterm-start}) to (\ref{pro:outage-general:i}). On the other hand, the fact that the cache hit probability is independent of $r$,  $\mathbb{E}_r[\tau_2]$ can be expressed as
\begin{align}
	\mathbb{E}_r[\tau_2] 
	&= T \int^{S/L}_{0}{
								f_{\mathrm{pop}}(f,\gamma)\mathrm{d}f
							}. \notag
\end{align} 
Using similar arguments, $\mathbb{E}_r[\tau_3]$ is written as:
\begin{align}
	\mathbb{E}_r[\tau_3] 
	&= C(\lambda)\left( 
								1 - 
								\int^{S/L}_{0}{
									f_{\mathrm{pop}}(f,\gamma)\mathrm{d}f
								}
							\right). \notag
\end{align}
Substituting these expressions into (\ref{pro:delivery-general:first}) concludes the proof.\hfill$\blacksquare$
\section{Proof of Proposition \ref{the:outage-special}}
\label{app:outage-special}
Since Proposition \ref{the:outage-special} is a special case of Theorem \ref{the:outage-general}, we follow the similar steps. We first rewrite (\ref{pro:outage-general:expanded}) as:
\begin{multline}
	\label{pro:outage-special:expanded}
	p_{\text{out}}(\lambda,T,\alpha,S,\gamma)  
	= 
	1 - 
	\underbrace{\mathbb{E}_r\Big[
	\mathbb{P}\left[\mathrm{ln}(1 + \mathrm{SINR}) > T \mid r \right]
	 \Big]}_\text{$(i)$}
	\underbrace{\mathbb{E}_r\Big[
	 \mathbb{P}\left[f_o \in \Delta_{b_o} \mid r\right]
	 \Big]}_\text{$(ii)$}.
\end{multline}
For the proceeding of $(i)$, the proof of Theorem \ref{the:outage-general} can be followed  starting from (\ref{pro:outage-general:firstterm-start}) to (\ref{pro:outage-general:firstterm2}). Then, the Laplace transform is written as
\begin{align}
	\mathcal{L}_{I_r}(s) 
	&=
		\mathbb{E}_{\Phi}
		\left[ 
			\prod_{i \in \Phi \backslash\{b_o\}}{ \mathbb{E}_{g}\left[\mathrm{exp}\left(-sgR^{-\alpha}_i\right)\right]}
		\right] \notag \\
	&\stackrel{(a)}{=} 
		\mathbb{E}_{\Phi}
		\left[ 
			\prod_{i \in \Phi \backslash\{b_o\}}{\frac{\mu}{\mu + sR_i^{-\alpha}}}
		\right] \notag \\
	&=
	\mathrm{exp}\left(
		-2\pi\lambda
		\int^{\infty}_{r}{\left(1 -  \frac{\mu}{\mu + sv^{-\alpha}}
		\right)v\mathrm{d}v}
	\right), \label{pro:outage-special:laplace0}
\end{align}
where $(a)$ comes from the new assumption that $g \sim \mathrm{Exponential}(\mu)$. Then, plugging $s = \mu r^{\alpha}\left( e^T - 1 \right)$ yields:
\begin{align}
	\mathcal{L}_{I_r}\left(\mu r^{\alpha}\left( e^T - 1 \right)\right) &=
	\mathrm{exp}\left(
		-2\pi\lambda
		\int^{\infty}_{r}{
			\frac{e^T - 1}{e^T - 1 + (\frac{v}{r})^{\alpha}}
			v\mathrm{d}v
		}
	\right). \notag
\end{align}
Using a change of variables $u = \left( \frac{v}{r(e^T-1)^{\alpha/2}}\right)^2$ results in
\begin{align}
	\label{pro:outage-special:laplace}
	\mathcal{L}_{I_r}\left(\mu r^{\alpha}\left( e^T - 1 \right)\right) 
	&=
	\mathrm{exp}\left(
		-\pi r^2\lambda\rho(T,\alpha)
	\right),
\end{align}
where
\begin{align}
	\rho(T,\alpha) 
	&= 
	(e^T - 1)^{2/\alpha}
	\int^{\infty}_{(e^T - 1)^{-2/\alpha}}{
		\frac{1}{1 + u^{\alpha/2}}
		\mathrm{d}u
	}. \notag
\end{align}
Substituting (\ref{pro:outage-special:laplace}) into (\ref{pro:outage-general:firstterm2}) with $r^2 \rightarrow v$ gives
\begin{align}
	\label{pro:outage-special:laplaceFinal}
	\pi\lambda\int_{0}^{\infty}{e^{-\pi \lambda v(1 + \rho(T,\alpha)) - \mu(e^T-1)\sigma^2v^{\alpha/2}}\mathrm{d}v}. 
\end{align}
Since $\alpha = 4$ in our special case, (\ref{pro:outage-special:laplaceFinal}) simplifies to
\begin{align}
	\label{pro:outage-special:alpha4}
	\pi\lambda\int_{0}^{\infty}{e^{-\pi \lambda v(1 + \rho(T,4)) - \mu(e^T-1)\sigma^2v^{2}}\mathrm{d}v},
\end{align}
where
\begin{align}
	\rho(T,4) 
	&= 
	(e^T - 1)^{2/\alpha}
	\int^{\infty}_{(e^T - 1)^{-2/\alpha}}{
		\frac{1}{1 + u^{2}}
		\mathrm{d}u
	} \notag \\
	&=
	(e^T - 1)^{2/\alpha}\left(\frac{\pi}{2} - \mathrm{arctan}\left( (e^T-1)^{-2/\alpha} \right) \right) \notag \\
	&=
	\sqrt{e^T - 1}\left(\frac{\pi}{2} - \mathrm{arctan}\left(\frac{1}{\sqrt{e^T-1}}\right) \right). \notag
\end{align}
From this point, (\ref{pro:outage-special:alpha4}) can be further simplified since it has a form similar to:
\begin{align}
	\int_{0}^{\infty}{e^{-ax}e^{-bx^2}\mathrm{d}x} 
	&= 
		\sqrt{\frac{\pi}{b}}
		\mathrm{exp}\left( \frac{a^2}{4b} \right)
		Q\left( \frac{a}{\sqrt{2b} }\right), \notag 
\end{align}
where $Q\left(x\right) = \frac{1}{\sqrt{2\pi}}\int_{x}^{\infty}{e^{-y^2/2}\mathrm{d}y}$ is the standard Gaussian tail probability. Setting $a = \pi\lambda(1 + \rho(T,4))$ and $b = \mu(e^T - 1)\sigma^2 = (e^T - 1)/\mathrm{SNR}$ gives
\begin{align}
	\label{pro:outage-special:i}
	\frac{\pi^{\frac{3}{2}}\lambda}{\sqrt{\frac{e^T-1}{\mathrm{SNR}}}}
	\mathrm{exp}
		\left(
			\frac{\left(\lambda\pi(1 + \rho(T,4))\right)^2}{4(e^T-1)/\mathrm{SNR}}
		\right)
	Q
		\left(
			\frac{\lambda\pi(1 + \rho(T,4))}{\sqrt{2(e^T-1)/\mathrm{SNR}}}
		\right).
\end{align}
This is the final expression for $(i)$ of (\ref{pro:outage-special:expanded}). The term $(ii)$ of (\ref{pro:outage-special:expanded}) can be obtained by using similar arguments given for (\ref{pro:outage-general:ii}) in the proof of Theorem \ref{the:outage-general}, meaning that the cache hit probability is independent of distance $r$. Thus:
\begin{align}
	\label{pro:outage-special:ii}
	\mathbb{E}_r\left[\mathbb{P}\left[f_o \in \Delta_{b_o} \mid r\right]\right] 
	&= 
	\int^{S/L}_{0}{
			f_{\mathrm{pop}}\left(f,\gamma\right)
			\mathrm{d}f
	} \notag \\
	&\stackrel{(a)}{=} 
	\int^{1 + S/L}_{1}{
			\left(\gamma - 1\right)f^{-\gamma}
			\mathrm{d}f
	} \notag \\
	&=
	1 - \left(\frac{L}{L+S}\right)^{\gamma - 1},	
\end{align}
where $(a)$ follows from plugging definition of $C(f,\lambda)$ given in Assumption \ref{ass:special} and changing the integration limits accordingly. The last term is the result of the integral. Therefore, we conclude the proof by plugging (\ref{pro:outage-special:i}) and (\ref{pro:outage-special:ii}) into (\ref{pro:outage-special:expanded}). \hfill$\blacksquare$

\section{Proof of Proposition \ref{the:delivery-special}}
\label{app:delivery-special}
The proposition is a special case of Theorem \ref{the:delivery-general}, thus we have the similar steps. We start by rewriting (\ref{pro:delivery-general:first}) as: 
\begin{multline}
	{\bar \tau} 
	=
	 		\mathbb{E}\left[ \underbrace{\mathbb{P}\left[\mathrm{ln}(1 + \mathrm{SINR}) > T \mid r \right]}_\text{$\tau_1$} \right]		\times \\
	 		\left(
			\mathbb{E}\left[
				\underbrace{T\mathbb{P}\left[f_o \in \Delta_{b_o} \mid r \right]}_\text{$\tau_2$}
			\right]
			+ 
			\mathbb{E}\left[
				\underbrace{C\left(\lambda  \right)\mathbb{P}\left[f_o \not\in \Delta_{b_o} \mid r \right]}_\text{$\tau_3$}
			\right]
			\right) \notag
\end{multline}
\begin{flalign}
		& \hspace{0.46cm} = \mathbb{E}\left[\tau_1\right]
		\left(
		\mathbb{E}\left[\tau_2\right] +
		\mathbb{E}\left[\tau_3\right]
		\right). \label{pro:delivery-special:expanded} &
\end{flalign}
In this expression, the term $\mathbb{E}\left[\tau_1\right]$ can be obtained from the proof of Proposition \ref{the:outage-special}. More precisely, observe that $\mathbb{E}\left[\tau_1\right]$ is identical to $(i)$ of (\ref{pro:outage-special:expanded}). Thus, following the steps from (\ref{pro:outage-special:laplace0}) to (\ref{pro:outage-special:i}), we obtain
\begin{align}
	\mathbb{E}\left[\tau_1\right]
	&=
	\mathbb{E}\Big[
		\mathbb{P}\left[\mathrm{ln}(1 + \mathrm{SINR}) > T \mid r \right]
	\Big] \notag \\
	&= 
	\frac{\pi^{\frac{3}{2}}\lambda}{\sqrt{\frac{e^T-1}{\mathrm{SNR}}}}
	\mathrm{exp}
		\left(
			\frac{\left(\lambda\pi(1 + \rho(T,4))\right)^2}{4(e^T-1)/\mathrm{SNR}}
		\right) 
	Q
		\left(
			\frac{\lambda\pi(1 + \rho(T,4))}{\sqrt{2(e^T-1)/\mathrm{SNR}}}
		\right). \label{pro:delivery-special:first}
\end{align}
On the other hand, $\mathbb{E}\left[\tau_2\right]$ can be obtained by taking $T$ out of the expectation and plugging (\ref{pro:outage-special:ii}) into the formula, i.e.
\begin{align}
	\mathbb{E}\left[\tau_2\right]
	&=
	\mathbb{E}\left[
		T\mathbb{P}\left[f_o \in \Delta_{b_o} \mid r \right]
	\right] \notag \\
	&= 
	T\left(1 - \left(\frac{L}{L+S}\right)^{\gamma - 1}\right).  \label{pro:delivery-special:second}
\end{align}
Finally, $\mathbb{E}\left[\tau_3\right]$ is easy to derive as 
\begin{align}
	\mathbb{E}\left[\tau_3\right]
	&=
	\mathbb{E}\left[
		C\left(\lambda  \right)\mathbb{P}\left[f_o \not\in \Delta_{b_o} \mid r \right]
	\right] \notag \\
	&=
	C\left(\lambda  \right)\left(\frac{L}{L+S}\right)^{\gamma - 1} \notag \\
	&=
		\left(\frac{C_1}{\lambda} + C_2\right)
		\left(\frac{L}{L+S}\right)^{\gamma - 1},  \label{pro:delivery-special:third}
\end{align}
where definition of $C(\lambda)$ follows from Assumption \ref{ass:special}. Substituting (\ref{pro:delivery-special:first}), (\ref{pro:delivery-special:second}) and (\ref{pro:delivery-special:third}) into (\ref{pro:delivery-special:expanded}) concludes the proof. \hfill$\blacksquare$
\end{document}